\documentclass{elsarticle}
\usepackage[
  paper = a4paper,
  left = 1.0in,
  top = 1.0in,
  right = 1.0in,
  bottom = 1.0in
]{geometry}
\usepackage{amsthm, amsmath, amssymb, amsfonts, amsbsy}
\usepackage{mathrsfs} 
\usepackage{esint} 
\allowdisplaybreaks
\usepackage{indentfirst}
\usepackage{graphicx}
\usepackage{epstopdf} 
\usepackage{makeidx}
  \makeindex
  
\usepackage{pdfpages} 
\usepackage{caption}
\usepackage{subcaption}
\usepackage{longtable}
\usepackage{xcolor}
  \definecolor{b+}{rgb}{0.5, 0.5, 1}
  \definecolor{b-}{rgb}{0, 0, 0.5}
  \definecolor{r|b-}{rgb}{0.5, 0, 0.25}
  \definecolor{g|b-}{rgb}{0, 0.5, 0.25}
  \definecolor{shadecolor}{rgb}{0.75, 0.75, 1}
\usepackage[
  colorlinks,
  urlcolor = black,
  linkcolor = b-,
  anchorcolor = r|b-,
  citecolor = g|b-
]{hyperref}


\renewcommand{\(}{\left(}
\renewcommand{\)}{\right)}
\newcommand{\lsb}{\left[}
\newcommand{\rsb}{\right]}
\newcommand{\lbk}{\left\{}
\newcommand{\rbk}{\right\}}

\newcommand{\lnm}{\left\|}
\newcommand{\rnm}{\right\|}
\newcommand{\lmdl}{\left|}
\newcommand{\rmdl}{\right|}

\newcommand{\rme}{\mathrm{e}}

\newcommand{\tp}{^{T}} 

\newcommand{\interior}[1]{\mathrm{Int} \( {#1} \)} 
\newcommand{\bigO}[1]{\mathcal{O} \( {#1} \)} 
\newcommand{\tr}[1]{\mathrm{Tr} \( {#1} \)} 
\newcommand{\uvec}[1]{\hat{\mathbf{#1}}} 

\DeclareMathOperator{\atan2}{atan2}
\DeclareMathOperator{\esssup}{esssup}



\theoremstyle{definition}
\newtheorem{deft}{Definition}[section]

\theoremstyle{definition}
\newtheorem{nt}{Notation}[section]

\theoremstyle{definition}
\newtheorem{asp}{Assumption}[section]

\theoremstyle{definition}
\newtheorem{thm}{Theorem}[section]

\theoremstyle{definition}
\newtheorem{lemma}{Lemma}[section]

\theoremstyle{definition}
\newtheorem{coro}{Corollary}[section]

\theoremstyle{definition}
\newtheorem{prop}{Proposition}[section]

\theoremstyle{definition}
\newtheorem{eg}{Example}[section]

\theoremstyle{definition}
\newtheorem{ex}{Exercise}[section]

\theoremstyle{remark}
\newtheorem{rmk}{Remark}[section]

\theoremstyle{plain}
\newtheorem{rmkplain}{Remark}[section]


\numberwithin{equation}{section}
\numberwithin{figure}{section}
\numberwithin{table}{section}



\begin{document}

\title{Solution Landscapes in the Landau-de Gennes Theory on Rectangles}
\author[1]{Lidong Fang}
\ead{ldfang.sjtu@gmail.com}
\author[2]{Apala Majumdar}
\ead{a.majumdar@bath.ac.uk}
\author[1]{Lei Zhang}
\ead{lzhang2012@sjtu.edu.cn}
\address[1]{School of Mathematical Sciences, Institute of Natural Sciences, and MOE-LSC, Shanghai Jiao Tong University, 800 Dongchuan Road, Shanghai 200240, China.}
\address[2]{Department of Mathematics and Statistics, University of Strathclyde, Glasgow, G11XH.}

\begin{abstract}
  We study nematic equilibria on rectangular domains, in a reduced two-dimensional Landau-de Gennes framework. These reduced equilibria carry over to the three-dimensional framework at a special temperature. There is one essential model variable---$\epsilon$ which is a geometry-dependent and material-dependent variable. We compute the limiting profiles exactly in two distinguished limits---the $\epsilon \to 0$ limit relevant for macroscopic domains and the $\epsilon \to \infty$ limit relevant for nano-scale domains. The limiting profile has line defects near the shorter edges in the $\epsilon \to \infty$ limit whereas we observe fractional point defects in the $\epsilon \to 0$ limit. The analytical studies are complemented by bifurcation diagrams for these reduced equilibria as a function of $\epsilon$ and the rectangular aspect ratio. We also introduce the concept of `non-trivial' topologies and the relaxation of non-trivial topologies to trivial topologies mediated via point and line defects, with potential consequences for non-equilibrium phenomena and switching dynamics. 
\end{abstract}
\begin{keyword}
  nematic liquid crystals \sep Landau-de Gennes model \sep bifurcation diagram \sep asymptotic limit \sep defects \sep Well Order reconstruction Solution (WORS)
\end{keyword}
\maketitle

\tableofcontents



\section{Introduction}

Nematic liquid crystals are quintessential examples of partially ordered materials intermediate between solids and liquids \cite{degennes1995physics} \cite{stewart2004static}. The constituent nematic molecules are typically asymmetric in shape and the macroscopic nematic state has a degree of long-range orientational ordering, i.e., the constituent molecules tend to align along locally preferred directions referred to as `directors' in the literature. The directional nature of nematics makes them highly responsive materials and indeed, their sensitivity to light and external electric fields have made nematics the working material of choice for the multi-billion dollar liquid crystal display industry \cite{bahadur1990liquid}. There is huge contemporary interest in understanding pattern formation in nematic systems, in the context of micro-patterned systems, thin geometries including surfaces and interfacial phenomena \cite{gallardo1999electrochemical}, \cite{brake2003biomolecular}, \cite{lagerwall2014cellulose}, \cite{lagerwall2012new}. 

There are several continuum macroscopic theories for nematic liquid crystals. We focus on the celebrated Landau-de Gennes (LdG) theory for nematics, in a reduced two-dimensional (2D) setting, as described in \autoref{sec:theory}. This approach is accepted for very thin domains, which are treated as two-dimensional domains and the nematic directors are assumed to be in the plane of the domain with tangent or in-plane boundary conditions. This reduction can be rigorously justified (see \cite{golovaty2017dimension}) and for a special temperature, we can make an immediate connection between the reduced 2D approach and the full three-dimensional (3D) approach, as explained in detail in the next section. The 2D LdG theory is essentially the Ginzburg-Landau theory \cite{bethuel1994ginzburg}, phrased in terms of the 2D LdG $\mathbf{Q}$-tensor, which is a symmetric and traceless $2\times 2$ matrix with two degrees of freedom. We refer to the eigenvector with the largest positive eigenvalue as being the nematic director in this framework and the order parameter, $s$, is proportional to the corresponding eigenvalue. The corresponding equilibria are local or global minimizers of a reduced 2D LdG free energy, which is effectively the Ginzburg-Landau free energy with one rescaled parameter, $\epsilon$. This parameter encodes the geometric length scale of the domain, material properties and the nematic elasticity; more precisely at the fixed temperature under consideration, it can be interpreted as the ratio of the nematic correlation length to the macroscopic domain size \cite{kralj2014order}. The nematic correlation length is a characteristic material and temperature-dependent length scale related to the size of nematic defects \cite{kralj2010finite}.

The prototype problem of nematics inside square domains has been studied extensively \cite{luo2012multistability}, \cite{kusumaatmaja2015free}, \cite{lewis2014colloidal}, \cite{kralj2014order}, following the experimental and modelling work presented in \cite{tsakonas2007multistable}. The reduced 2D LdG equilibria on square domains with tangent boundary conditions, which require the nematic director to be tangent or parallel to the square edges, have been well classified. In particular, in \cite{kralj2014order}, the authors report a new Well Order reconstruction Solution (WORS) in this reduced setting, distinguished by a pair of mutually orthogonal defect lines along the square diagonals. The WORS is the unique nematic equilibrium for sufficiently large values of $\epsilon$. For small $\epsilon$, there are at least six different competing equilibria : two `diagonal' states for which the director is aligned along one of the square diagonals and four `rotated' states, for which the director rotates by $\pi$ radians between a pair of opposite edges. It is very natural to study the counterpart of this problem on rectangles with tangent boundary conditions, to systematically study the effects of geometrical anisotropy. 

It is difficult to make analytic progress for the problem in general, except for two distinguished limits: the $\epsilon \to \infty$ limit which describes nano-scale geometries or `small' geometries, and the $\epsilon \to 0$ limit which describes macroscopic domains with size much greater than the nematic correlation length. In the $\epsilon \to \infty$ limit, the problem effectively reduces to a Dirichlet boundary-value problem for both the components of the limiting profile, $\mathbf{Q} _{\infty}$.
We show that the cross structure of the WORS is lost on all rectangles.
However, the cross structure of the WORS is retained on square domains, if we replace the Dirichlet conditions with surface energies that enforce tangent conditions on the square edges, for large values of the surface anchoring parameter, $W \propto \epsilon$ for large values of $\epsilon$. In other words, the cross structure is specific to a square domain but not an artefact of Dirichlet conditions. On a rectangle, the limiting profile, $\mathbf{Q} _{\infty}$ has two nodal lines or defect lines along the shorter rectangular edges, referred to as sBD2 solutions in the remainder of the manuscript. The sBD2 solutions also survive on rectangles with surface energies and large values of the surface anchoring parameter i.e. the surface energies do not change the qualitative conclusions for large values of $W$. We use a combination of formal calculations and elegant maximum principle arguments to illustrate the effects of geometrical anisotropy in this limit.

In the $\epsilon \to 0$ limit, the limiting problem is just the Laplace equation for an angle $\theta$ in the plane of the rectangle, subject to Dirichlet conditions on the rectangular edges. In this limit, $s$ is a constant away from the rectangular vertices and the director, $\uvec{n} = \( \cos \theta, \sin \theta \)$. The tangent conditions necessarily create discontinuities in $\theta$ at the vertices and the limiting energy is effectively the Dirichlet energy of $\theta$ away from the vertices (also see \cite{majumdar2010landau}, \cite{frank1958liquid} and \cite{bethuel1993asymptotics}). There are three competing nematic equilibria in this limit---the diagonal solutions, the rotated states (R1, R2) for which the director rotates between a pair of horizontal edges and the rotated (R3, R4) solutions for which the director rotates between a pair of vertical edges. The four rotated states are not energetically degenerate on a rectangle, in contrast to a square. We use simple symmetry-based arguments to compare the respective energies of these solutions and these arguments may be of wider interest. Using standard arc-continuation methods, we numerically investigate the solution landscapes on rectangles as a function of $\epsilon$ and geometrical anisotropy. The bifurcation diagrams are qualitatively similar to those reported for a square (see \cite{robinson2017molecular}) except for the fact that the energetically expensive R1, R2 solutions are disconnected from the connected diagonal and R3, R4 solution branches. The geometrical anisotropy affects R1, R2 solution branches, in the sense that they cannot be numerically continued to larger values of $\epsilon$ and our numerical methods cannot track the pathway from the R1, R2 solutions to the unique sBD2 solution as $\epsilon$ increases. This might have interesting experimental consequences, particularly for transition states that connect the R1, R2 stable states to the diagonal or R3, R4 states. We will investigate this in future work. 

We introduce the concept of non-trivial topologies, based on the profile of $\theta$ near the rectangular vertices. This is largely a theoretical notion, inspired by super-twisted nematic devices in a planar setting \cite{bahadur1990liquid}. The director is constrained on the rectangular edges by virtue of the tangent conditions but is free to rotate in the plane, between the edges. The minimal allowed rotation is labelled as `trivial' and excess rotation is labelled as `non-trivial'. We speculate that non-trivial topologies can be realised by localised rotation of the nematic molecules near the vertices of a rectangular domain. Analogous remarks apply to shallow 3D wells with a rectangular cross-section. These non-trivial topologies will relax to trivial topologies when the rotation is removed. The relaxation pathways are mediated by defects of different dimensionality, depending on $\epsilon$ and the geometrical anisotropy, and can offer new optical possibilities or switching mechanisms between trivial states e.g. diagonal, R3 and R4 states. The relaxation pathways will be even more interesting with external electric fields, which are not accounted for in this manuscript.

To summarise, our paper can be regarded as a set of results for reduced LdG equilibria on rectangular domains with tangent boundary conditions. These results highlight the effects of geometrical anisotropy and to some extent, boundary conditions in certain limiting cases. The interplay between $\epsilon$ and the geometrical anisotropy has not come across yet. Such studies do not necessarily have unexpected outcomes but are much needed for systematic and structured investigations of the effects of geometry on nematic equilibria e.g. regular polygons versus polygons with sides of unequal length, dimensions of defect sets and control of defect sets and multiplicity of equilibria by tuning the geometrical parameters.

The paper is organised as follows. In \autoref{sec:theory}, we review the reduced 2D LdG theory, the free energy, the rescaling, the governing partial differential equations, and the boundary conditions. In \autoref{sec:limits}, we analytically study distinguished asymptotic limits on a rectangular domain with tangent boundary conditions. In \autoref{sec:bifurcation}, we numerically study how the 2D LdG equilibria depend on $\epsilon$, for different values of the rectangular anisotropy. In \autoref{sec:nontrivial}, we discuss relaxation mechanisms and their dependence on $\epsilon$ and we conclude in \autoref{sec:conclusions} with some perspectives.

\section{Theoretical Framework}
\label{sec:theory}

We work within the continuum Landau-de Gennes (LdG) theory for nematic liquid crystals. The LdG theory is one of the most powerful continuum theories for nematics in the literature \cite{degennes1995physics}. It describes the state of nematic anisotropy by the LdG $\mathbf{Q} _{\textrm{LdG}}$-tensor order parameter---a symmetric, traceless $3 \times 3$ matrix whose eigenvectors model the directions of averaged molecular alignment in space (referred to as directors) and the eigenvalues are a measure of the degree of orientational order about the corresponding eigenvectors. From the spectral decomposition theorem, we can write $\mathbf{Q} _{\textrm{LdG}}$ as
\begin{align}
  \mathbf{Q} _{\textrm{LdG}} = \sum _{i = 1} ^{3} \lambda _{i} \uvec{e} _{i} \otimes \uvec{e} _{i},
\end{align}
where $\lbk \uvec{e} _{1}, \uvec{e} _{2}, \uvec{e} _{3} \rbk$ constitute an orthonormal basis and $\sum _{i = 1} ^{3} \lambda _{i} = 0$. A $\mathbf{Q} _{\textrm{LdG}}$-tensor is said to be (i) isotropic if $\mathbf{Q} _{\textrm{LdG}} = 0$, (ii) uniaxial if $\mathbf{Q} _{\textrm{LdG}}$ has a pair of degenerate non-zero eigenvalues and (iii) biaxial if $\mathbf{Q} _{\textrm{LdG}}$ has three distinct eigenvalues \cite{degennes1995physics}. We refer to the `director' as being the eigenvector with the largest positive eigenvalue.

We study nematic equilibria on three-dimensional wells with a rectangular cross section, i.e., our domain is
\begin{align}
  \label{eq:domain3D}
  \mathcal{B} = \lbk \( x, y, z \) \in \mathbb{R} ^{3} \mid \( x, y \) \in \Omega _{aL, bL}, 0 \leq z \leq h \rbk,
\end{align}
where $h > 0$ is the height of the well, and for $p, q > 0$,
\begin{align}
  \Omega _{p, q} := \lbk \( x, y \) \in \mathbb{R} ^{2} \mid 0 \leq x \leq p, 0 \leq y \leq q \rbk.
\end{align}
Here, $L > 0$ is a fixed length, $a$ and $b$ measure the geometrical anisotropy and in what follows, we always fix $b = 1$ and vary $a$. For the sake of simplicity, we will drop the subscripts in $\Omega _{aL, bL}$.

We impose tangent boundary conditions on the well surfaces, in the form of Dirichlet uniaxial conditions on the lateral surfaces and surface energies on the top and bottom surfaces, that enforce planar degenerate anchoring, i.e., constrain the director to be in the plane of the surface whilst respecting the tangent conditions on the edges of the rectangular cross-section. Our work builds on the substantial work done for square domains  \cite{tsakonas2007multistable}, \cite{luo2012multistability}, \cite{kralj2014order}, \cite{robinson2017molecular}, \cite{wang2019order} and more recently on rectangular domains \cite{walton2018nematic}.

A particularly simple form of the LdG free energy is given by
\begin{align}
  F _{\textrm{LdG}} \lsb \mathbf{Q} _{\textrm{LdG}} \rsb 
  := \int _{\mathcal{B}} \( \frac{K}{2} \lmdl \nabla \mathbf{Q} _{\textrm{LdG}} \rmdl ^{2} + f _{\textrm{B}} \( \mathbf{Q} _{\textrm{LdG}} \) \) + \int _{\partial \mathcal{B}} f _{\textrm{S}} \( \mathbf{Q} _{\textrm{LdG}} \),
\end{align}
where $K > 0$ is a material-dependent elastic constant, $f _{\textrm{B}}$ is the bulk energy, $f _{\textrm{S}}$ is the surface energy, and
\begin{align}
  \lmdl \nabla \mathbf{Q} _{\textrm{LdG}} \rmdl ^{2} 
  & := \sum _{i, j, k = 1} ^{3} \( \partial _{k} \( Q _{\textrm{LdG}} \) _{ij} \) ^{2}, \\
  \label{eq:fB}
  f _{\textrm{B}} \( \mathbf{Q} _{\textrm{LdG}} \) 
  & := \frac{A}{2} \tr{\mathbf{Q} _{\textrm{LdG}} ^{2}} - \frac{B}{3} \tr{\mathbf{Q} _{\textrm{LdG}} ^{3}} + \frac{C}{4} \( \tr{\mathbf{Q} _{\textrm{LdG}} ^{2}} \) ^{2}.
\end{align}
The variable $A = \alpha \( T - T ^{*} \)$ is a rescaled temperature, $\alpha, B, C > 0$ are material-dependent constants, and $T ^{*}$ is the characteristic nematic supercooling temperature. Further, $\tr{\mathbf{Q} _{\textrm{LdG}} ^{2}} = \sum _{i, j = 1} ^{3} \( Q _{\textrm{LdG}} \) _{ij} \( Q _{\textrm{LdG}} \) _{ji}$ and $\tr{\mathbf{Q} _{\textrm{LdG}} ^{3}} = \sum _{i, j, k = 1} ^{3} \( Q _{\textrm{LdG}} \) _{ij} \( Q _{\textrm{LdG}} \) _{jk} \( Q _{\textrm{LdG}} \) _{ki}$.
The rescaled temperature $A$ has three characteristic values \cite{majumdar2010equilibrium}: 
(i) $A = 0$, below which the isotropic phase $\mathbf{Q} _{\textrm{LdG}} \equiv 0$ loses stability, 
(ii) the nematic-isotropic transition temperature, $A = \frac{B ^{2}}{27C}$, at which $f _{\textrm{B}}$ is minimized by the isotropic phase and a continuum of uniaxial states, 
and (iii) the nematic superheating temperature, $A = \frac{B ^{2}}{24C}$, above which the isotropic state is the unique critical point of $f _{\textrm{B}}$.
We work with $A < 0$ throughout this manuscript for which $f _{\textrm{B}}$ favours an ordered uniaxial nematic phase  and the physically relevant/observable nematic equilibria correspond to either global or local LdG minimizers subject to the imposed boundary conditions.

Following the methods in \cite{golovaty2015dimension} and for certain physically relevant choices of the surface anchoring energies (that favour planar degenerate anchoring on the top and bottom surfaces) on the top and bottom surfaces, we can rigorously justify the reduction from the three-dimensional domain $\mathcal{B}$ to the two-dimensional domain $\Omega$ , in the $h \to 0$ limit or the thin film limit.
In this limit, the physically relevant equilibria belong to the class of $\mathbf{Q} _{\textrm{LdG}}$-tensors that have the unit vector in the $z$-direction as a fixed eigenvector so that $\mathbf{Q} _{\textrm{LdG}}$ can be written as
\begin{align}
  \label{eq:Q_fixed_z}
  \mathbf{Q} _{\textrm{LdG}}
  = \( 
  \begin{array}{ccc} 
    q _{1} - q _{3} & q _{2} & 0 \\ 
    q _{2} & -q _{1} - q _{3} & 0 \\ 
    0 & 0 & 2 q _{3} 
  \end{array} 
  \),
\end{align}
where the three degrees of freedom of $\mathbf{Q} _{\textrm{LdG}}$ are contained in the three scalar functions---$q _{1}$, $q _{2}$ and $q _{3}$ which only depend on the spatial variables in the cross-section. The details of this reduction can be found in \cite{wang2019order}. 
In this limit, it suffices to study minimizers of the LdG free energy on the two-dimensional cross-section $\Omega$ in the class of $\mathbf{Q} _{\textrm{LdG}}$-tensors in \autoref{eq:Q_fixed_z}. Moreover, in \cite{canevari2019well}, the authors compute rigorous bounds for $q _{3}$ as a function of $A$, building on previous work \cite{ignat2016instability} in a different context. They show that for a particular choice $A = -\frac{B ^{2}}{3C}$, all competing equilibria have constant $q _{3} \equiv -\frac{B}{6C}$ so that the physically relevant $\mathbf{Q} _{\textrm{LdG}}$-tensors can be written as
\begin{align} 
  \label{eq:Q_3D_2D} 
  \mathbf{Q} _{\textrm{LdG}} 
  = \(
  \begin{array}{cc}
    \mathbf{Q} & \\
    & 0
  \end{array}
  \)
  - \frac{B}{6C} \(
  \begin{array}{ccc} 
    -1 & & \\ 
    & -1 & \\ 
    & & 2 
  \end{array} 
  \)
\end{align} 
with only two degrees of freedom. Here, $\mathbf{Q}$ is a symmetric traceless $2\times 2$ matrix with only two degrees of freedom that account for the nematic ordering in the plane of $\Omega$.
Substituting \autoref{eq:Q_3D_2D} into the LdG free energy and using Gamma-convergence methods in the $h \to 0$ limit as in \cite{golovaty2015dimension}, one can show that for $A = -\frac{B ^{2}}{3C}$, the LdG free energy minimizers converge uniformly to minimizers of the reduced energy (after suitable rescaling $\bar{\mathbf{Q}} \( x, y \) = \mathbf{Q} \( xL, yL \)$)
\begin{align}
  \label{eq:LdG_2D}  
  \frac{1}{h K} F \lsb \bar{\mathbf{Q}} \rsb
  := \int _{\tilde{\Omega}} \( \frac{1}{2} \lmdl \nabla \bar{\mathbf{Q}} \rmdl ^{2} + \frac{L ^{2}}{K} \( -\frac{B ^{2}}{4C} \tr{\bar{\mathbf{Q}} ^{2}} + \frac{C}{4} \( \tr{\bar{\mathbf{Q}} ^{2}} \) ^{2} \) - \frac{5}{432} \frac{B ^{4} L ^{2}}{C ^{3} K} \),
\end{align}
and
\begin{align}
  \label{eq:domain2D}
  \tilde{\Omega} := \Omega _{a, b} = \lsb 0, a \rsb \times \lsb 0, b \rsb.
\end{align}
In \autoref{eq:LdG_2D}, we have reduced the original three-dimensional problem to a two-dimensional problem on the re-scaled rectangle $\tilde{\Omega}$ at the fixed temperature $A = -\frac{B^2}{3C}$ and the remainder of this manuscript is devoted to this reduced problem; generalizations will be discussed in the Conclusions. 
We can equivalently write $\bar{\mathbf{Q}}$ in terms of a scalar order parameter field $s: \tilde{\Omega} \to \mathbb{R}$ and a nematic director $\uvec{n}: \tilde{\Omega} \to \mathbb{S} ^{1}$, which is a unit-vector field that models the preferred direction of nematic alignment in the plane of $\tilde{\Omega}$ as shown below. For $\( x, y \) \in \tilde{\Omega}$,
\begin{align}
  \label{eq:Q_s_n}
  \bar{\mathbf{Q}} \( x, y \) = 2 s \( x, y \) \( \uvec{n} \( x, y \) \otimes \uvec{n} \( x, y \) - \mathbf{I} _{2} / 2 \right)
\end{align}
Alternatively,
\begin{align}
  \bar{\mathbf{Q}} \( x, y \) =
  \(
  \begin{array}{cc}
    \bar{Q} _{11} \( x, y \) &  \bar{Q} _{12} \( x, y \) \\
    \bar{Q} _{12} \( x, y \) & -\bar{Q} _{11} \( x, y \)
  \end{array}
  \),
\end{align}
where we can relate the matrix components $\bar{Q} _{11}$ and $\bar{Q} _{12}$ to $s$ and $\uvec{n}$ by defining a new variable $\theta := \atan2 \( \bar{Q} _{12}, \bar{Q} _{11} \) / 2$, where,
\begin{align}
  \atan2 \( y, x \) = \left\{
  \begin{array}{ll}
    \arctan \( y / x \), & x > 0, \\
    \arctan \( y / x \) + \pi, & x \leq 0, y \geq 0, \\
    \arctan \( y / x \) - \pi, & x \leq 0, y < 0.
  \end{array}
  \right.
\end{align}
When $\bar{\mathbf{Q}} \neq 0$, $\theta$ is well-defined, and we have the following important relations,
\begin{align}
  \label{eq:Q_s_n_componentwise}
  \uvec{n} \( x, y \) & = \( \cos \theta \( x, y \), \sin \theta \( x, y \) \) \tp, \\
  \bar{Q} _{11} \( x, y \) & = s \( x, y \) \cos \( 2 \theta \( x, y \) \), \\
  \bar{Q} _{12} \( x, y \) & = s \( x, y \) \sin \( 2 \theta \( x, y \) \).
\end{align}

We rescale the reduced energy \autoref{eq:LdG_2D} by defining $s _{0} := \frac{B}{2C}$ (so that $\tr{\bar{\mathbf{Q}} ^{2}} = 2 s _{0} ^{2}$ for minimizers of the bulk potential in \autoref{eq:LdG_2D}) and $\tilde{\mathbf{Q}}$ as
\begin{align}
  \label{eq:Qtilde}
  \tilde{\mathbf{Q}} \( x, y \) := \bar{\mathbf{Q}} \( x, y \) / s _{0}, \quad \( x, y \) \in \tilde{\Omega},
\end{align}
along with
\begin{align}
  \label{eq:epsilon}
  \epsilon := \frac{\sqrt{2 C K}}{B L}.
\end{align} 
Since we have fixed $A = -\frac{B ^{2}}{3C}$, the ratio $\xi := \sqrt{\frac{KC}{B ^{2}}}$ is proportional to the nematic correlation length \cite{virga1995variational}, which is a characteristic material-dependent length scale. Therefore, $\epsilon$ is the ratio of the nematic correlation length to the domain size (measured in terms of $L$).
The rescaled energy is then equivalent to
\begin{align}
  \label{eq:LdG_2D_dimless}
  \frac{4 C ^{2}}{B ^{2} h K} \tilde{F} \lsb \tilde{\mathbf{Q}} \rsb 
  := \int_{\tilde{\Omega}} \( \lmdl \nabla \tilde{Q} _{11} \rmdl ^{2} + \lmdl \nabla \tilde{Q} _{12} \rmdl ^{2} + \frac{1}{\epsilon ^{2}} \( \tilde{Q} _{11} ^{2} + \tilde{Q} _{12} ^{2} - 1 \) ^{2} - \frac{32}{27} \frac{1}{\epsilon ^{2}} \),
\end{align}
and all critical points (including global and local energy minimizers) are classical solutions of the associated Euler-Lagrange equations, for smooth boundary conditions.
The Euler-Lagrange equations are given by the following system of elliptic coupled partial differential equations, also referred to as the Ginzburg-Landau equations on $\tilde{\Omega}$ in the literature \cite{bethuel1993asymptotics}.
\begin{align}
  \label{eq:EL_stronganchoring}
  \left\{
  \begin{array}{c}
    \Delta \tilde{Q} _{11} = \epsilon ^{-2} \( \tilde{Q} _{11} ^{2} + \tilde{Q} _{12} ^{2} - 1 \) \tilde{Q} _{11},
    \\
    \Delta \tilde{Q} _{12} = \epsilon ^{-2} \( \tilde{Q} _{11} ^{2} + \tilde{Q} _{12} ^{2} - 1 \) \tilde{Q} _{12}.
  \end{array}
  \right.
\end{align}

The last ingredient is the choice of boundary conditions. As in \cite{tsakonas2007multistable}, \cite{luo2012multistability}, we work with tangent boundary conditions so that $\theta$ is a multiple of $\pi$ on the horizontal edges (defined by $y = 0$ and $y = b$) and $\theta$ is an odd multiple of $\pi / 2$ on the vertical edges (defined by $x = 0$ and $x = a$). Recalling the relations \autoref{eq:Q_s_n_componentwise}, this implies that $\tilde{Q} _{12} = 0$ on all four edges, $\tilde{Q} _{11} = +1$ on the horizontal edges and $\tilde{Q} _{11} = -1$ on the vertical edges.
However, there is a necessary mismatch at the corners; we adopt the same interpolatory approach as in \cite{luo2012multistability} and define a vector field $\tilde{\mathbf{g}} _{d}$ for $0 < d < \min \lbk a / 2, b / 2 \rbk$ as,
\begin{align}
  \label{eq:boundary_g}
  \tilde{\mathbf{g}} _{d; a, b} \( x, y \) :=
  \left\{
  \begin{array}{cc}
    \lsb + T _{\frac{d}{a}} \( \frac{x}{a} \), 0 \rsb, & x \in \lsb 0, a \rsb, y \in \lbk 0, b \rbk, \\
    \lsb - T _{\frac{d}{b}} \( \frac{y}{b} \), 0 \rsb, & y \in \lsb 0, b \rsb, x \in \lbk 0, a \rbk.
  \end{array}
  \right.
\end{align}
The trapezoidal function $T _{d} : \lsb 0, 1 \rsb \to \mathbb{R}$ is given by,
\begin{align}
  \label{eq:boundary_T}
  T _{d} \( t \) = \min \lbk t / d, 1, \( 1 - t \) / d \rbk = \left\{
  \begin{array}{cc}
    t / d, & 0 \leq t \leq d, \\
    1, & d \leq t \leq 1 - d, \\
    \( 1 - t \) / d, & 1 - d \leq t \leq 1.
  \end{array}
  \right.
\end{align}
For the sake of simplicity, we will use the brief notation $\tilde{\mathbf{g}} _{d}$ instead of $\tilde{\mathbf{g}} _{d; a, b}$.

We then fix $\( \tilde{Q} _{11}, \tilde{Q} _{12} \) = \tilde{\mathbf{g}} _{d} $ on $\partial \tilde{\Omega}$. We do not rigorously justify this choice of boundary conditions but we believe that the qualitative trends are not affected by the choice of $d$ above, provided it is sufficiently small. In what follows, we analytically and numerically study the reduced and rescaled tensor $\tilde{\mathbf{Q}} \in W ^{1, 2} \( \tilde{\Omega}; S _{0} \)$ subject to the Dirichlet condition, where $S _{0}$ is the space of symmetric, traceless $2 \times 2$ matrices and $W ^{1, 2}$ is the usual Sobolev space \cite{evans2010partial}.

\section{Two Limiting Problems in terms of \texorpdfstring{$\epsilon$}{epsilon}}
\label{sec:limits}

We can make analytic progress with the solutions of the system \autoref{eq:EL_stronganchoring} in two distinguished limits, the $\epsilon \to \infty$ and $\epsilon \to 0$ limits respectively. The two limits have different physical interpretations. For $A = -\frac{B ^{2}}{3C}$, the length scale $\xi = \sqrt{\frac{KC}{B ^{2}}}$ is proportional to the nematic correlation length, which is typically of the order of tens to hundreds of nanometres and describes small-scale phenomena \cite{virga1995variational}. Hence, the first limit, $\epsilon \to \infty$, is relevant for domains with characteristic length comparable to the nematic correlation length, e.g., few hundred nanometres \cite{kralj2014order}, \cite{canevari2017order}. The second limit, $\epsilon \to 0$, is relevant for domains with characteristic length $L$ much larger than the nematic correlation length, e.g., $L$ is on the micron-scale which is the usual experimentally achievable length scale although nano-scale geometries are becoming a reality with advanced nano-fabrication techniques.

\subsection {The \texorpdfstring{$\epsilon \to \infty$}{epsilon Goes to Infinity} Limit}
\label{sec:limits_Laplace}

\subsubsection{Strong Anchoring Condition}
\label{sec:strong}

In this section, we discuss solutions of \autoref{eq:EL_stronganchoring} in the $\epsilon \to \infty$ limit, with the Dirichlet condition $\( \tilde{Q} _{11}, \tilde{Q} _{12} \) = \tilde{\mathbf{g}} _{d}$ defined in \autoref{eq:boundary_g}. Dirichlet conditions are referred to as `strong anchoring' in the literature. In practice, it is experimentally difficult to realise Dirichlet conditions but Dirichlet problems are relatively analytically tractable and are often, a good approximation to realistic scenarios.

One can prove there exists a critical $\epsilon _{c} > 0$ such that \autoref{eq:EL_stronganchoring} has a unique solution for $\epsilon > \epsilon _{c}$ \cite{lamy2014bifurcation}, \cite{canevari2017order}. The limit of large $\epsilon$ with strong anchoring has been studied extensively on square domains, where the authors of \cite{kralj2014order} and \cite{canevari2017order} report the new Well Order Reconstruction Solution (WORS) in this limit. The reduced WORS $\bar{\mathbf{Q}}$-tensor in \autoref{eq:LdG_2D} vanishes along the square diagonals and the nodal lines of $\bar{\mathbf{Q}}$ partition the square into four quadrants such that the director $\uvec{n}$ is constant in each quadrant. In \cite{canevari2017order}, the authors study the WORS in terms of solutions of the scalar Allen-Cahn equation at the special temperature $A = -\frac{B ^{2}}{3C}$ and rigorously prove that the WORS exists for all $\epsilon > 0$, is globally stable for $\epsilon$ large enough and loses stability as $\epsilon$ decreases.

A natural question is---is the WORS specific to a square domain with strong anchoring conditions in the $\epsilon \to \infty$ limit? We address this question in two parts: (i) by studying solutions of \autoref{eq:EL_stronganchoring} on a rectangle with $a > b = 1$ in $\Omega$ and strong anchoring conditions and (ii) by studying solutions of \autoref{eq:EL_stronganchoring}, in the $\epsilon \to \infty$ limit, on a square domain with weak anchoring, i.e., a weaker implementation of the Dirichlet condition \autoref{eq:boundary_g} as described below. 

In the strong anchoring case, it is relatively straightforward to prove that $\lnm \tilde{\mathbf{Q}} \rnm _{L ^{\infty} \( \tilde{\Omega} \)}$ is bounded independently of $\epsilon$ for all solutions of \autoref{eq:EL_stronganchoring} by maximum principle arguments as in \cite{majumdar2010equilibrium} (here $L ^{\infty}$ is the usual norm i.e. $\lnm \tilde{\mathbf{Q}} \rnm _{L ^{\infty} \( \tilde{\Omega} \)} = \esssup _{\mathbf{r} \in \tilde{\Omega}} \lnm \tilde{\mathbf{Q}} \( \mathbf{r} \) \rnm$). Hence, in the limit $\epsilon \to \infty$, the coupled system \autoref{eq:EL_stronganchoring} reduces to the uncoupled Laplace equations on $\tilde{\Omega}$ as given below, 
\begin{align}
  \label{eq:EL_limit_infty}
  \left\{
  \begin{array}{c}
    \Delta \tilde{Q} _{11} = 0,
    \\
    \Delta \tilde{Q} _{12} = 0.
  \end{array}
  \right.
\end{align}

Our first result shows that the unique solution of \autoref{eq:EL_stronganchoring}, with strong anchoring and $\epsilon$ sufficiently large, is well approximated by the unique solution of the limiting problem \autoref{eq:EL_limit_infty}.
\begin{prop}
  \label{prop:1}
  Let $\tilde{\mathbf{Q}} _{\epsilon} \in W ^{1, 2} \( \tilde{\Omega}; S _{0} \)$ be a solution of the LdG Euler-Lagrange equations \autoref{eq:EL_stronganchoring} on the rectangular domain $\tilde{\Omega}$ for $\epsilon > 0$, subject to the boundary condition $\( \( \tilde{Q} _{\epsilon} \) _{11}, \( \tilde{Q} _{\epsilon} \) _{12} \) = \tilde{\mathbf{g}} _{d}$ on $\partial \tilde{\Omega}$ for a given $0 < d < \min \lbk a / 4, b / 4 \rbk$.
  Then $\tilde{\mathbf{Q}} _{\epsilon}$ converges to the unique solution $\tilde{\mathbf{Q}} _{\infty}$ of \autoref{eq:EL_limit_infty} as $\epsilon \to \infty$, subject to the same Dirichlet condition with error estimates
  \begin{align} 
    \forall i = 1, 2, \quad \lnm \( \tilde{Q} _{\epsilon} \) _{1i} - \( \tilde{Q} _{\infty} \) _{1i} \rnm _{L ^{\infty} \( \tilde{\Omega} \)} \leq C \epsilon ^{-2},
  \end{align}
  for a positive constant $C$ independent of $\epsilon$.
\end{prop}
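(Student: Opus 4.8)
The plan is to recast the statement as a quantitative stability estimate for the Poisson problem solved by the difference between $\tilde{\mathbf{Q}} _{\epsilon}$ and its harmonic limit. First I would record two elementary facts about the limiting problem. The datum $\tilde{\mathbf{g}} _{d}$ is continuous on $\partial \tilde{\Omega}$: the trapezoidal profile $T$ vanishes at its endpoints, so the horizontal- and vertical-edge prescriptions agree (both equal zero) at each of the four corners, and its second component is identically zero. Consequently \autoref{eq:EL_limit_infty} with this datum has a unique solution $\tilde{\mathbf{Q}} _{\infty}$ given componentwise by harmonic extensions: $\( \tilde{Q} _{\infty} \) _{12} \equiv 0$, while $\( \tilde{Q} _{\infty} \) _{11}$ is the harmonic function on $\tilde{\Omega}$ with boundary values $+ T _{d/a} \( x/a \)$ on the horizontal edges and $- T _{d/b} \( y/b \)$ on the vertical edges. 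Uniqueness is immediate from the maximum principle for harmonic functions on the bounded domain $\tilde{\Omega}$.

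Next, for $i = 1, 2$ I set $w _{i} := \( \tilde{Q} _{\epsilon} \) _{1i} - \( \tilde{Q} _{\infty} \) _{1i}$. Since $\( \tilde{Q} _{\infty} \) _{1i}$ is harmonic and both functions carry the same Dirichlet data, $w _{i}$ solves the Poisson problem
\begin{align}
  \Delta w _{i} = \epsilon ^{-2} \( ( \tilde{Q} _{\epsilon} ) _{11} ^{2} + ( \tilde{Q} _{\epsilon} ) _{12} ^{2} - 1 \) ( \tilde{Q} _{\epsilon} ) _{1i} =: f _{i} \quad \text{in } \tilde{\Omega}, \qquad w _{i} = 0 \quad \text{on } \partial \tilde{\Omega}.
\end{align}
The key quantitative input is the $\epsilon$-uniform bound $\lnm \tilde{\mathbf{Q}} _{\epsilon} \rnm _{L ^{\infty} \( \tilde{\Omega} \)} \leq \Lambda$ quoted above and established by maximum-principle arguments as in \cite{majumdar2010equilibrium}; it forces $\lnm f _{i} \rnm _{L ^{\infty} \( \tilde{\Omega} \)} \leq M _{0} \epsilon ^{-2}$ for a constant $M _{0} = M _{0} \( \Lambda \)$ independent of $\epsilon$.

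It then remains to convert the smallness of the right-hand side into the claimed sup-norm smallness of $w _{i}$, which I would do with an explicit comparison (barrier) function. Choose $R$ so that $\tilde{\Omega} \subset \ball{R}$ and set $\phi \( \mathbf{r} \) := \tfrac{1}{4} \lnm f _{i} \rnm _{L ^{\infty} \( \tilde{\Omega} \)} \( R ^{2} - \lmdl \mathbf{r} \rmdl ^{2} \)$, so that $\Delta \phi = - \lnm f _{i} \rnm _{L ^{\infty} \( \tilde{\Omega} \)}$ and $\phi \geq 0$ on $\tilde{\Omega}$. Then $w _{i} - \phi$ is subharmonic and $w _{i} + \phi$ superharmonic, with boundary values $- \phi \leq 0$ and $+ \phi \geq 0$ respectively; the weak maximum principle gives $\lmdl w _{i} \rmdl \leq \phi \leq \tfrac{1}{4} R ^{2} \lnm f _{i} \rnm _{L ^{\infty} \( \tilde{\Omega} \)} \leq \tfrac{1}{4} R ^{2} M _{0} \epsilon ^{-2}$ throughout $\tilde{\Omega}$. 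Taking $C := \tfrac{1}{4} R ^{2} M _{0}$ yields the stated estimate for both components, and in particular $\tilde{\mathbf{Q}} _{\epsilon} \to \tilde{\mathbf{Q}} _{\infty}$ in $L ^{\infty}$.

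The genuinely substantive ingredient is the $\epsilon$-independent $L ^{\infty}$ bound on $\tilde{\mathbf{Q}} _{\epsilon}$; everything downstream is a soft maximum-principle argument, so I would expect no serious obstacle there. The only point I would treat with care is the regularity needed to apply the maximum principle up to $\partial \tilde{\Omega}$: since the datum is merely Lipschitz (the corners of the trapezoid), I would note that $w _{i}$ is smooth in the interior by elliptic regularity and continuous on $\tilde{\Omega}$ up to its boundary with $w _{i} = 0$ there, which is all the comparison argument uses.
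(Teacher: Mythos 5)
Your proposal is correct and follows essentially the same route as the paper: the difference $w_{i}$ solves a Poisson problem with zero boundary data and right-hand side of size $\bigO{\epsilon ^{-2}}$ (via the $\epsilon$-uniform $L^{\infty}$ bound on $\tilde{\mathbf{Q}} _{\epsilon}$), and a maximum-principle comparison converts this into the stated sup-norm estimate. The only cosmetic difference is your barrier: you use the explicit quadratic $\tfrac{1}{4} \lnm f _{i} \rnm _{L ^{\infty}} \( R ^{2} - \lmdl \mathbf{r} \rmdl ^{2} \)$, whereas the paper compares against $\pm \epsilon ^{-2} v$ with $v$ the solution of $\Delta v = 1$, $v = 0$ on $\partial \tilde{\Omega}$ --- an interchangeable choice.
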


\begin{proof}[Proof of \autoref{prop:1}]
  By \cite{majumdar2010landau} (proposition 13), we have that $\tilde{\mathbf{Q}} _{\epsilon} \in C ^{\infty} \( \tilde{\Omega}; S _{0} \)$.
  Similarly, we have that $\tilde{\mathbf{Q}} _{\infty} \in C ^{\infty} \( \tilde{\Omega}; S _{0} \)$.
  Since $\lnm \( \tilde{g} _{d} \) _{i} \rnm _{L ^{\infty} \( \partial \tilde{\Omega} \)} \leq 1$ ($i = 1, 2$), 
  we have $\lnm \( \tilde{Q} _{\epsilon} \) _{1i} \rnm _{L ^{\infty} \( \tilde{\Omega} \)} \leq 1$ ($i = 1, 2$) \cite{bethuel1993asymptotics} (proposition 2).
  Therefore, comparing \autoref{eq:EL_stronganchoring} and \autoref{eq:EL_limit_infty}, we have,
  \begin{align}
    \forall i = 1, 2, & \quad - \epsilon ^{-2} \leq \Delta \( \( \tilde{Q} _{\epsilon} \) _{1i} - \( \tilde{Q} _{\infty} \) _{1i} \) \leq \epsilon ^{-2} , & \quad \tilde{\Omega}, \nonumber \\
    \forall i = 1, 2, & \quad \( \tilde{Q} _{\epsilon}  \) _{1i} - \( \tilde{Q} _{\infty} \) _{1i} = 0, & \quad \partial \tilde{\Omega}.
  \end{align}
  Let $v \in C ^{\infty} \( \tilde{\Omega}; \mathbb{R} \)$ be a solution of,
  \begin{align}
    \Delta v = 1 , & \quad \tilde{\Omega},  \nonumber \\
    v = 0, & \quad \partial \tilde{\Omega}.
  \end{align}
  The solution $v$ only depends the domain $\tilde{\Omega}$ which is fixed. We can check that $-\epsilon ^{-2} v$ and $\epsilon ^{-2} v$ are the sub-solution and the super-solution for both matrix components of $\( \tilde{\mathbf{Q}} _{\epsilon} - \tilde{\mathbf{Q}} _{\infty} \)$ and the result then follows, i.e.,
  \begin{align}
    \forall i = 1, 2, \quad \lnm \( \tilde{Q} _{\epsilon} \) _{1i} - \( \tilde{Q} _{\infty} \) _{1i} \rnm _{L ^{\infty} \( \tilde{\Omega} \)} \sim \bigO{\epsilon ^{-2}}.
  \end{align}
\end{proof}

We can compute the solution of the limiting problem \autoref{eq:EL_limit_infty} exactly for a given Dirichlet boundary condition as shown below.

\begin{prop}
  \label{prop:2}
  Let $\tilde{\mathbf{Q}} _{\infty}$ be the unique solution of \autoref{eq:EL_limit_infty}, on the rectangular domain $\tilde{\Omega}$ defined above, subject to the boundary condition $\( \( \tilde{Q} _{\infty} \) _{11}, \( \tilde{Q} _{\infty} \) _{12} \) = \tilde{\mathbf{g}} _{d}$, on $\partial \tilde{\Omega}$ for given $0 < d < \min \lbk a / 4, b / 4 \rbk$.
  Then we have $\( \tilde{Q} _{\infty} \) _{12} \equiv 0$ in $\tilde{\Omega}$ and, for all $\( x, y \) \in \tilde{\Omega}$,
  \begin{align}
    & \quad \( \tilde{Q} _{\infty} \) _{11} \( x, y \) \nonumber \\
    & = \sum _{k ~ \mathrm{odd}} \frac{4 \sin \( k \pi d / a \)}{k ^{2} \pi ^{2} d / a} \sin \( \frac{k \pi x}{a} \) \frac{\sinh \( k \pi \( b - y \) / a \) + \sinh \( k \pi y / a \)}{\sinh \( k \pi b / a \)} \nonumber \\
    & - \sum _{k ~ \mathrm{odd}} \frac{4 \sin \( k \pi d / b \)}{k ^{2} \pi ^{2} d / b} \sin \( \frac{k \pi y}{b} \) \frac{\sinh \( k \pi \( a - x \) / b \) + \sinh \( k \pi x / b \)}{\sinh \( k \pi a / b \)}.
    \label{eq:prop2_Q11}
  \end{align}
\end{prop}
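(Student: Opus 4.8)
The plan is to exploit the linearity of the Laplace operator in \autoref{eq:EL_limit_infty} and solve the Dirichlet problem explicitly by separation of variables, invoking uniqueness of the harmonic Dirichlet problem (standard elliptic theory, with continuous boundary data) throughout. First I would dispose of the second component: the boundary function $\tilde{\mathbf{g}}_d$ has vanishing second entry on every edge (see \autoref{eq:boundary_g}), so the constant function $0$ is harmonic and attains the prescribed data; by uniqueness $(\tilde{Q}_\infty)_{12} \equiv 0$ in $\tilde\Omega$.

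For the first component I would split the boundary data additively and write $(\tilde{Q}_\infty)_{11} = u + w$, where $u$ is harmonic with $u = T_{d/a}(x/a)$ on the horizontal edges $y \in \{0,b\}$ and $u = 0$ on the vertical edges, while $w$ is harmonic with $w = 0$ on the horizontal edges and $w = -T_{d/b}(y/b)$ on the vertical edges. Each sub-problem has continuous boundary data around all of $\partial\tilde\Omega$, since $T$ vanishes at the endpoints $0,1$ so the two edge prescriptions agree (both equal to $0$) at every corner; and the sum reproduces $\tilde{\mathbf{g}}_d$ by construction. By linearity and uniqueness it suffices to solve for $u$ and $w$ separately.

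For $u$, the homogeneous conditions at $x = 0,a$ dictate a Fourier sine expansion in $\{\sin(k\pi x/a)\}$, the $y$-dependence being carried by the profile $h_k(y) := [\sinh(k\pi(b-y)/a) + \sinh(k\pi y/a)]/\sinh(k\pi b/a)$, which is the unique combination of $\exp(\pm k\pi y/a)$ for which $h_k(y)\sin(k\pi x/a)$ is harmonic and $h_k(0) = h_k(b) = 1$. The only genuine computation is the sine expansion of the trapezoidal function: integrating $T_{d/a}(x/a)$ against $\sin(k\pi x/a)$ gives coefficients $4\sin(k\pi d/a)/(k^2\pi^2 d/a)$ for odd $k$, the even coefficients vanishing by the symmetry $T_d(1-t) = T_d(t)$ about $x = a/2$. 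This yields precisely the first sum in \autoref{eq:prop2_Q11}, and $w$ follows from the identical argument after interchanging $x \leftrightarrow y$ and $a \leftrightarrow b$ and inserting the overall minus sign, producing the second sum.

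Finally I would verify that the assembled series is the solution. The coefficients decay like $k^{-2}$ and each profile satisfies $0 < h_k \le 1$ on $\tilde\Omega$ (by convexity of $\sinh$, the quantity $\sinh\alpha + \sinh\beta$ with $\alpha+\beta$ fixed is maximised at an endpoint, namely $\sinh(k\pi b/a)$), so the series converges absolutely and uniformly on $\overline{\tilde\Omega}$ to a continuous function that matches the data edge by edge: on $y = 0,b$ the first sum reduces to the uniformly convergent sine series of $T$ while the second sum vanishes, and symmetrically on the vertical edges. On compact subsets of the interior the profiles $h_k$ and their derivatives decay exponentially in $k$, legitimising termwise differentiation and showing the sum is harmonic; uniqueness then identifies it with $(\tilde{Q}_\infty)_{11}$. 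I expect the only mildly delicate point to be this convergence-and-harmonicity bookkeeping — in particular the uniform bound $h_k \le 1$ and the interior exponential decay — rather than the elementary coefficient integral, while the corner compatibility is automatic from $T(0) = T(1) = 0$.
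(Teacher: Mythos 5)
Your proposal is correct and takes essentially the same approach as the paper: $(\tilde{Q}_{\infty})_{12} \equiv 0$ by uniqueness of the harmonic extension of zero boundary data, and $(\tilde{Q}_{\infty})_{11}$ by edge-wise superposition and separation of variables, with the identical odd-$k$ sine coefficients $4\sin(k\pi d/a)/(k^{2}\pi^{2} d/a)$. The only differences are organisational: you decompose into two opposite-edge sub-problems whose profile $h_{k}(y)$ is exactly the sum of the paper's two single-edge sinh ratios (the paper instead superposes four reflected and rotated copies of one auxiliary function $f(x,y;a,b)$ with data on a single edge), and you spell out the uniform-convergence, boundary-matching and termwise-harmonicity bookkeeping (via $0 < h_{k} \leq 1$ and interior exponential decay) that the paper leaves implicit as a standard computation.
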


\begin{proof}[Proof of \autoref{prop:2}]
  The existence of a unique solution for the Laplace equation with Dirichlet conditions, as in \autoref{eq:EL_limit_infty} is standard \cite{evans2010partial}. Let $\tilde{\mathbf{Q}} _{\infty}$ be the corresponding unique solution of \autoref{eq:EL_limit_infty} with boundary condition $\( \( \tilde{Q} _{\infty} \) _{11}, \( \tilde{Q} _{\infty} \) _{12} \) = \tilde{\mathbf{g}} _{d}$ on $\partial \tilde{\Omega}$, where $\( \tilde{Q} _{\infty} \) _{12} \equiv 0$ on $\partial \tilde{\Omega}$ from our choice of $\tilde{\mathbf{g}} _{d}$ above.
  Hence, we immediately have $\( \tilde{Q} _{\infty} \) _{12} \equiv 0$ in $\interior{\tilde{\Omega}}$.
  To compute $\( \tilde{Q} _{\infty} \) _{11}$, we solve the following boundary-value problem for $x \in \( 0, a \)$ and $y \in \( 0, b \)$,
  \begin{align}
    \Delta \( \tilde{Q} _{\infty} \) _{11} \( x, y \) & = 0,  \nonumber\\
    \( \tilde{Q} _{\infty} \) _{11} \( x, 0 \) = \( \tilde{Q} _{\infty} \) _{11} \( x, b \) & = + T _{d / a} \( x / a \), \nonumber \\
    \( \tilde{Q} _{\infty} \) _{11} \( a, y \) = \( \tilde{Q} _{\infty} \) _{11} \( 0, y \) & = - T _{d / b} \( y / b \).
  \end{align}

  Consider the scalar function $f \( x, y; a, b \)$ which satisfies the following boundary-value problem,
  \begin{align}
    \Delta f \( x, y; a, b \) & = 0, \nonumber \\
    f \( x, 0; a, b \) & = T _{d / a} \( x / a \), \nonumber \\
    f \( a, y; a, b \) = f \( x, b; a, b \) = f \( 0, y; a, b \) & = 0.
  \end{align}

  We solve for $f \( x, y; a, b \)$ by separation of variables (also see \cite{lewis2015defects}). A standard computation shows, using boundary conditions at $x = 0$ and $x = a$,
  \begin{align}
    f \( x, y; a, b \) = \sum _{k = 1} ^{+ \infty} \sin \frac{k \pi x}{a} \( c _{k}' \( \rme ^{k \pi y / a} - \rme ^{-k \pi y / a} \) + c _{k}'' \( \rme ^{k \pi \( y - b \) / a} - \rme ^{-k \pi \( y - b \) / a} \) \).
  \end{align}
  where $c _{k}'$ and $c _{k}''$ are constants depend on $k$.
  We use the remaining boundary conditions at $y = 0$ and $y = b$ and Fourier series methods to obtain
  \begin{align}
    \quad f \( x, y; a, b \)
    & = \sum _{k ~ \mathrm{odd}} \frac{4 \sin \( k \pi d / a \)}{k ^{2} \pi ^{2} d / a} \sin \frac{k \pi x}{a} \frac{\rme ^{k \pi \( y - b \) / a} - \rme ^{-k \pi \( y - b \) / a}}{\rme ^{-k \pi b / a} - \rme ^{k \pi b / a}}, \nonumber \\
    & = \sum _{k ~ \mathrm{odd}} \frac{4 \sin \( k \pi d / a \)}{k ^{2} \pi ^{2} d / a} \sin \frac{k \pi x}{a} \frac{\sinh \( k \pi \( b - y \) / a \)}{\sinh \( k \pi b / a \)}.
    \label{eq:prop3_f}
  \end{align}
  Then, it is straightforward to verify that
  \begin{align}
    \( \tilde{Q} _{\infty} \) _{11} \( x, y \) = f \( x, y; a, b \) - f \( y, a - x; b, a \) + f \( x, b - y; a, b \) - f \( y, x; b, a \).
  \end{align}
  and the result \autoref{eq:prop2_Q11} now follows.
\end{proof}

On a square with $a = b$, the WORS solution is distinguished by $\tilde{\mathbf{Q}} = 0$ along the square diagonals, in particular $\tilde{\mathbf{Q}} \( \frac{a}{2}, \frac{a}{2} \) = 0$ at the square centre. It is reasonable to ask if $\tilde{\mathbf{Q}} _{\infty}$ retains these nodal diagonal lines for $a \neq b$. A simple test is to check if $\tilde{\mathbf{Q}} _{\infty} \( \frac{a}{2}, \frac{b}{2} \) \neq 0$ which would demonstrate the loss of the WORS structure for $a \neq b$. In \autoref{fig:WORS_sBD2}, we plot $\tilde{\mathbf{Q}} _{\infty}$ on a square ($a = b = 1$) which is simply the WORS and $\tilde{\mathbf{Q}} _{\infty}$ on a rectangle ($a = 1.5$, $b = 1$) to illustrate the differences to the reader.

\begin{figure}[ht]
  \centering
  \includegraphics[width = 0.5\textwidth] {./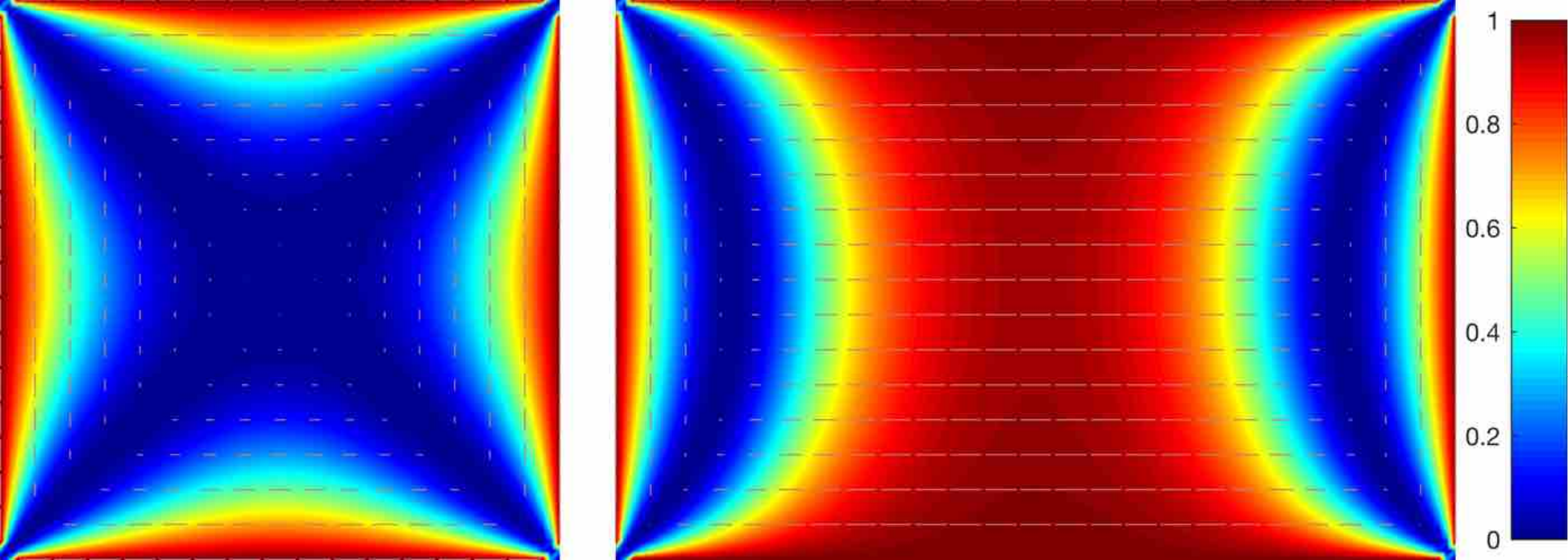} \\
  \caption{Left: WORS at domain size $1 \times 1$, which is marked as `B' in \autoref{fig:Bifurcation__LC_bifurcation_N4_h64_64_120945_E} and \autoref{fig:Bifurcation__LC_bifurcation_N4_h64_64_120945_Q}. Right: sBD2 at domain size $1.5 \times 1$, which is marked as `B' in \autoref{fig:Bifurcation__LC_bifurcation_N4_h96_64_120945_E} and \autoref{fig:Bifurcation__LC_bifurcation_N4_h96_64_120945_Q}. Mesh spacing is $h = 1 / 64$. We fix $\( \( \tilde{Q} _{\infty} \) _{11}, \( \tilde{Q} _{\infty} \) _{12} \) = \tilde{\mathbf{g}} _{d}$ on $\partial \Omega$ with $d = 0.03$. The colour bar represents the value of $\tilde{s} _{\infty} ^{2} = \tr{\tilde{\mathbf{Q}} _{\infty} ^{2}} / 2$.}
  \label{fig:WORS_sBD2}
\end{figure}

\begin{prop}
  \label{prop:3}
  Let $\tilde{\mathbf{Q}} _{\infty}$ be the unique solution of \autoref{eq:EL_limit_infty} on the rectangular domain $\tilde{\Omega}$, subject to the boundary condition $\( \( \tilde{Q} _{\infty} \) _{11}, \( \tilde{Q} _{\infty} \) _{12} \) = \tilde{\mathbf{g}} _{d}$, for a given $0 < d < \min \lbk a / 4, b / 4 \rbk$.
  For $a > b = 1$, we have
  \begin{align}
    \( \tilde{Q} _{\infty} \) _{11} \( \frac{a}{2}, \frac{b}{2} \) > 0,
  \end{align}
  so that we lose the cross structure of the WORS on all rectangles (similar arguments also apply to $0 < a < b = 1$).
\end{prop}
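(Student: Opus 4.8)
The plan is to exploit the exact Fourier representation from \autoref{prop:2} together with an elementary maximum-principle comparison that isolates the competition between the two pairs of edges. Writing $b = 1$ and evaluating \autoref{eq:prop2_Q11} at the centre $\( a/2, 1/2 \)$, every occurrence of $\sin \( k \pi x / a \)$ and $\sin \( k \pi y / b \)$ collapses to $\sin \( k \pi / 2 \) = \( -1 \) ^{\( k - 1 \) / 2}$ for odd $k$, and the hyperbolic factors simplify through $\sinh \( k \pi b / a \) = 2 \sinh \( k \pi b / 2 a \) \cosh \( k \pi b / 2 a \)$ to $1 / \cosh \( k \pi / 2 a \)$ and $1 / \cosh \( k \pi a / 2 \)$ respectively. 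Thus $\( \tilde{Q} _{\infty} \) _{11} \( a/2, 1/2 \) = S _{1} - S _{2}$, where $S _{1}$ collects the contribution of the long horizontal edges (at distance $1/2$ from the centre) and $S _{2}$ that of the short vertical edges (at distance $a/2$). The aim is to show $S _{1} > S _{2}$; the difficulty is that the common factor $\( -1 \) ^{\( k - 1 \) / 2}$ makes both series alternating, so a naive term-by-term comparison is inconclusive.

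The clean way to see positivity is to resum the alternating series by reverting to the harmonic function itself. First I would dispose of the constant-data model problem: let $w$ be harmonic on $\tilde{\Omega}$ with $w = +1$ on the horizontal edges and $w = -1$ on the vertical edges, so that $\( \tilde{Q} _{\infty} \) _{11}$ is the trapezoidal perturbation of $w$. Setting $w = 2 w _{1} - 1$, where $w _{1}$ is harmonic with boundary data $1$ on the horizontal edges and $0$ on the vertical ones, it suffices to prove $w _{1} \( a/2, 1/2 \) > 1/2$. For this I restrict $w _{1}$ to the centred unit square $Q = \lsb \( a - 1 \) / 2, \( a + 1 \) / 2 \rsb \times \lsb 0, 1 \rsb \subset \tilde{\Omega}$, whose centre is exactly $\( a/2, 1/2 \)$, and compare it with the unit-square harmonic function $W$ carrying data $1$ on the horizontal edges and $0$ on the vertical edges. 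The diagonal symmetry $W \( x, y \) + W \( y, x \) = 1$ forces $W \( 1/2, 1/2 \) = 1/2$. On $\partial Q$ the two functions agree ($= 1$) on the horizontal edges, while on the two vertical edges of $Q$ — which lie in the interior of $\tilde{\Omega}$ — the strong maximum principle gives $w _{1} > 0 = W$. Hence $w _{1} \geq W$ on $\partial Q$ with strict inequality on a set of positive measure, and the comparison principle together with the strong maximum principle yields $w _{1} \( a/2, 1/2 \) > 1/2$, i.e. $w \( a/2, 1/2 \) > 0$. (Equivalently, since the two Cartesian coordinates of Brownian motion are independent, $w _{1} \( a/2, 1/2 \)$ equals the probability that the vertical coordinate exits $\( 0, 1 \)$ before the horizontal coordinate exits $\( 0, a \)$; Brownian scaling makes the latter exit time stochastically larger for $a > 1$, so this probability exceeds $1/2$.)

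It remains to pass from the constant data to the trapezoidal data $\tilde{\mathbf{g}} _{d}$. I would split $\( \tilde{Q} _{\infty} \) _{11} = u _{1} - u _{2}$ into its horizontal-edge and vertical-edge contributions. The same inscribed-square comparison applies verbatim to $u _{1}$: because the ramp $T _{d/a} \( x/a \)$ reaches its plateau sooner (in the rescaled variable) than $T _{d}$ once $a > 1$, one checks the pointwise bound $T _{d/a} \( x/a \) \geq T _{d} \( \cdot \)$ on the horizontal edges of $Q$, and $u _{1} \geq 0$ on its vertical edges, giving $u _{1} \( a/2, 1/2 \) > u _{1} ^{\mathrm{sq}} \( 1/2, 1/2 \)$, the corresponding square value. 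The main obstacle is the matching upper bound $u _{2} \( a/2, 1/2 \) \leq u _{2} ^{\mathrm{sq}} \( 1/2, 1/2 \)$: here the comparison breaks down near the corners, where the ramp forces the boundary data to vanish and the sign of $\( \tilde{Q} _{\infty} \) _{11}$ on the internal edges of $Q$ is not controlled. Writing $\( \tilde{Q} _{\infty} \) _{11} \( a/2, 1/2 \) = w \( a/2, 1/2 \) - R _{1} + R _{2}$, where $R _{1}, R _{2} \geq 0$ are the harmonic fields generated by the corner ramp-deficits on the horizontal and vertical edges, the corrections are adverse ($R _{1} > R _{2}$, since the horizontal ramps sit on the nearer edges) and both vanish as $a \to 1 ^{+}$. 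The remaining work — which is the crux — is therefore to show that the bulk asymmetry $w \( a/2, 1/2 \)$ dominates $R _{1} - R _{2}$ for every admissible $d < \min \lbk a/4, 1/4 \rbk$; I would close this by returning to the explicit series of \autoref{prop:2}, estimating $R _{1} - R _{2}$ against the leading positive harmonic contribution and, for the delicate regime $a \to 1 ^{+}$, comparing the rates at which the two quantities vanish, the corner deficits being supported only on width-$d$ arcs and hence lower order than the order-one edge asymmetry driving $w$.
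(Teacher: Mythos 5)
Your constant-data argument is correct but it is only the easy half of the problem, and you acknowledge as much. For $d\to 0$ your inscribed-square comparison ($w_1$ versus $W$, with the diagonal symmetry $W(x,y)+W(y,x)=1$ pinning the square's centre value at $1/2$) is essentially the same maximum-principle step the paper performs between its $u_1$ (rectangle $I_a\times I_b$, data on the long edges) and $u_2$ (square $I_b\times I_b$). The genuine gap is the vertical-edge contribution for the actual datum $\tilde{\mathbf{g}}_d$ with $d>0$: the upper bound $u_2(a/2,1/2)\le u_2^{\mathrm{sq}}(1/2,1/2)$ is precisely what the inscribed square cannot deliver, and your substitute --- showing the bulk asymmetry $w(a/2,1/2)$ dominates the corner deficits $R_1-R_2$ for every admissible $d<\min\{a/4,1/4\}$ --- is left as a programme, not a proof. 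It is also genuinely delicate: as $a\to 1^{+}$ the main term $w(a/2,1/2)$ itself tends to $0$ (at $a=1$ the centre value vanishes identically by square symmetry), while $d$ may remain of order one, so both the main term and the deficits vanish and you would need uniform-in-$d$ rate comparisons; ``lower order because supported on width-$d$ arcs'' is not an argument in that regime. As written, the proposal establishes the proposition only in the $d\to 0$ limit, which the paper relegates to a remark, and not for the stated range $0<d<\min\{a/4,b/4\}$.

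The missing idea --- and the way the paper closes exactly this gap --- is to compare the vertical-edge contribution not with the inscribed unit square but with the \emph{circumscribed} square $I_a\times I_a$, and then use scale invariance of harmonicity. In the paper's notation, $u_3$ (rectangle, data $-T_{d/b}(y/b+1/2)$ on the vertical edges) is compared with $u_4$ (square $I_a\times I_a$, data $-T_{d/a}(y/a+1/2)$): on the shared vertical boundary portions $y\in I_b$ one has pointwise $T_{d/a}(y/a+1/2)\ge T_{d/b}(y/b+1/2)$, because each argument of the defining minimum dominates term by term ($(y+a/2)/d\ge(y+b/2)/d$ and $(a/2-y)/d\ge(b/2-y)/d$), while on the horizontal edges of the rectangle $u_3=0\ge u_4$ since $u_4\le 0$ by the maximum principle; the strong maximum principle then gives $u_3(0,0)>u_4(0,0)$ with no remainder to estimate and uniformly in $d$. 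Finally the scaling identity $u_4(x,y)=u_5(xb/a,yb/a)$ converts the large-square centre value into the unit-square one, $u_4(0,0)=u_5(0,0)$, and combining with your (correct) inscribed-square bound for the horizontal contribution yields $(\tilde{Q}_\infty)_{11}(a/2,b/2)>u_2(0,0)+u_5(0,0)=0$ by the square solution's antisymmetry. If you want to salvage your scheme, replace the $R_1$--$R_2$ bookkeeping by this circumscribe-and-rescale step; it is exactly the device that makes the trapezoidal data harmless.
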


\begin{proof}[Proof of \autoref{prop:3}]
  Let $\tilde{\mathbf{Q}} _{\infty} '$ denote the WORS solution of \autoref{eq:EL_limit_infty} on a square domain $\Omega _{b, b}$ with boundary condition $\( \( \tilde{Q} _{\infty} ' \) _{11}, \( \tilde{Q} _{\infty} ' \) _{12} \) = \tilde{\mathbf{g}} _{d; b, b}$ on $\partial \Omega _{b, b}$.
  To prove the result above, we show that $\( \tilde{Q} _{\infty} \) _{11} \( a / 2, b / 2 \) > \( \tilde{Q} _{\infty} ' \) _{11} \( b / 2, b / 2 \) = 0$. 
  We construct five auxiliary boundary value problems below and we illustrate the boundary conditions in \autoref{fig:rectangularedge_12345}, using the boundary function in \autoref{eq:boundary_T}. Let $I _{p} = \lsb - p / 2, p / 2 \rsb$ for $p > 0$.
  \begin{align}
    &
    \left\{
    \begin{array}{ll}
      \Delta u _{1} \( x, y \) = 0, & \quad \( x, y \) \in \Omega _{1} := I _{a} \times I _{b}, \\
      u _{1} \( x, y \) = +T _{d / a} \( x / a + 1 / 2 \), & \quad x \in I _{a}, y \in \lbk - b / 2, b / 2 \rbk, \\
      u _{1} \( x, y \) = 0, & \quad x \in \lbk - a / 2, a / 2 \rbk, y \in I _{b}.
    \end{array}
    \right.
    \\
    &
    \left\{
    \begin{array}{ll}
      \Delta u _{2} \( x, y \) = 0, & \quad \( x, y \) \in \Omega _{2} := I _{b} \times I _{b}, \\
      u _{2} \( x, y \) = +T _{d / b} \( x / b + 1 / 2 \), & \quad x \in I _{b}, y \in \lbk - b / 2, b / 2 \rbk, \\
      u _{2} \( x, y \) = 0, & \quad x \in \lbk - b / 2, b / 2 \rbk, y \in I _{b}.
    \end{array}
    \right.
    \\
    &
    \left\{
    \begin{array}{ll}
      \Delta u _{3} \( x, y \) = 0, & \quad \( x, y \) \in \Omega _{3} := I _{a} \times I _{b}, \\
      u _{3} \( x, y \) = 0, & \quad x \in I _{a}, y \in \lbk - b / 2, b / 2 \rbk, \\
      u _{3} \( x, y \) = -T _{d / b} \( y / b + 1 / 2 \), & \quad x \in \lbk - a / 2, a / 2 \rbk, y \in I _{b}.
    \end{array}
    \right.
    \\
    &
    \left\{
    \begin{array}{ll}
      \Delta u _{4} \( x, y \) = 0, & \quad \( x, y \) \in \Omega _{4} := I _{a} \times I _{a}, \\
      u _{4} \( x, y \) = 0, & \quad x \in I _{a}, y \in \lbk - a / 2, a / 2 \rbk, \\
      u _{4} \( x, y \) = -T _{d / a} \( y / a + 1 / 2 \), & \quad x \in \lbk - a / 2, a / 2 \rbk, y \in I _{a}.
    \end{array}
    \right.
    \\
    &
    \left\{
    \begin{array}{ll}
      \Delta u _{5} \( x, y \) = 0, & \quad \( x, y \) \in \Omega _{5} := I _{b} \times I _{b}, \\
      u _{5} \( x, y \) = 0, & \quad x \in I _{b}, y \in \lbk - b / 2, b / 2 \rbk, \\
      u _{5} \( x, y \) = -T _{d / b} \( y / b + 1 / 2 \), & \quad x \in \lbk - b / 2, b / 2 \rbk, y \in I _{b}.
    \end{array}
    \right.
  \end{align}

  Using the maximum principle \cite{evans2010partial}, we have $u _{1} \geq 0$ in $\Omega _{1}$ and $u _{2} \geq 0$ in $\Omega _{2}$, so that, $u _{1} > u _{2}$ on the vertical lines $x = \pm b / 2$. Hence,
  \begin{align}
    u _{1} \( x, y \) \geq u _{2} \( x, y \) & \quad \( x, y \) \in \partial \Omega _{2}.
  \end{align}
  By the strong maximum principle, $u _{1} \( 0, 0 \) > u _{2} \( 0, 0 \)$.

  Similarly, we have that $u _{3} \leq 0$ on $\Omega _{3}$ and $u _{4} \leq 0$ in $\Omega _{4}$, so that
  \begin{align}
    u _{3} \( x, y \) \geq u _{4} \( x, y \) & \quad \( x, y \) \in \partial \Omega _{3}.
  \end{align}
  By the strong maximum principle, $u _{3} \( 0, 0 \) > u _{4} \( 0, 0 \)$. 
  We can also check that $u _{4} \( x, y \) = u _{5} \( xb / a, yb / a \)$, for $\( x, y \) \in \Omega _{4}$, since the Laplace operator is invariant with respect to uniform scaling and hence, $u _{4} \( 0, 0 \) = u _{5} \( 0, 0 \)$.

  By superposition of boundary conditions, we see that the functions $\( \tilde{Q} _{\infty} \) _{11}$ and $u _{6} := u _{1} + u _{3}$ only differ by a translation,
  \begin{align}
    \( \tilde{Q} _{\infty} \) _{11}   \( x + \frac{a}{2}, y + \frac{b}{2} \) = u _{6} \( x, y \), & \quad \( x, y \) \in \Omega _{1} = \Omega _{3},
  \end{align}
  and we have the similar relationship for $\( \tilde{Q} _{\infty} \) _{11} '$ and $u _{7} := u _{2} + u _{5}$,
  \begin{align}
    \( \tilde{Q} _{\infty} \) _{11} ' \( x + \frac{b}{2}, y + \frac{b}{2} \) = u _{7} \( x, y \), & \quad \( x, y \) \in \Omega _{2} = \Omega _{5}.
  \end{align}
  The result immediately follows since
  \begin{align}
    \( \tilde{Q} _{\infty} \) _{11} \( \frac{a}{2}, \frac{b}{2} \) = u _{6} \( 0, 0 \) > u _{7} \( 0, 0 \) = \( \tilde{Q} _{\infty} \) _{11} ' \( \frac{b}{2}, \frac{b}{2} \) = 0.
  \end{align}
\end{proof}

\begin{figure}[ht]
  \centering
  \includegraphics[width = 0.5\textwidth] {./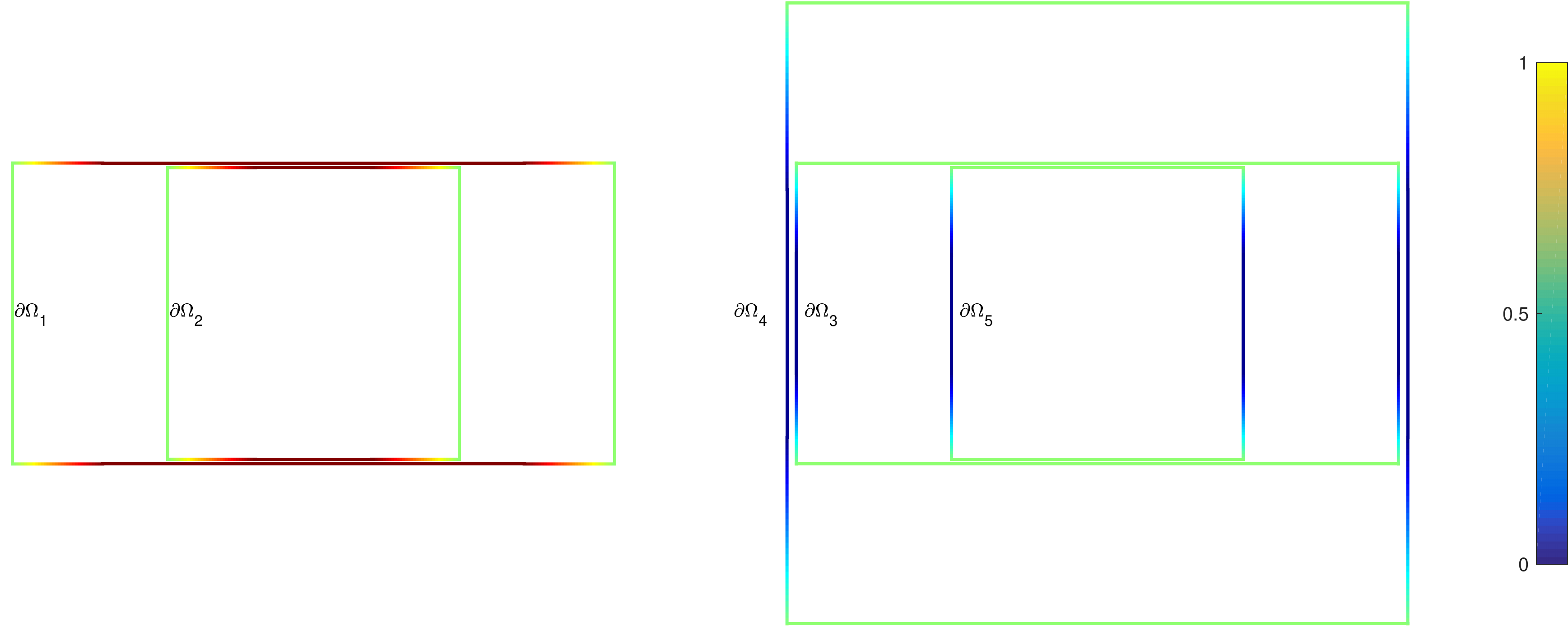} \\
  \caption{Dirichlet boundary conditions for rectangular domains. The domain parameter is $a = 2b$ and the boundary parameter is $d = 0.3$ here. Left: boundary values for $u _{1}$ on $\partial \Omega _{1}$ (rectangular boundary) and $u _{2}$ on $\partial \Omega _{2}$ (square boundary) are represented by different colours. Right: boundary values for $u _{3}$ on $\partial \Omega _{3}$ (rectangular boundary), $u _{4}$ on $\partial \Omega _{4}$ (large square boundary) and $u _{5}$ on $\partial \Omega _{5}$ (small square boundary) are represented by different colours. The domains are scaled a little bit for a better illustration.}
  \label{fig:rectangularedge_12345}
\end{figure}

\begin{rmkplain}
  In the limit $d \to 0$, the function $\( \tilde{Q} _{\infty} \) _{11}$ converges pointwise to
  \begin{align}
    \( \tilde{Q} _{\infty} \) _{11} \( x, y \)
    & = \sum _{k ~ \mathrm{odd}} \frac{4}{k \pi} \sin \( \frac{k \pi x}{a} \) \frac{\sinh \( k \pi \( b - y \) / a \) + \sinh \( k \pi y / a \)}{\sinh \( k \pi b / a \)} \nonumber \\
    & - \sum _{k ~ \mathrm{odd}} \frac{4}{k \pi} \sin \( \frac{k \pi y}{b} \) \frac{\sinh \( k \pi \( a - x \) / b \) + \sinh \( k \pi x / b \)}{\sinh \( k \pi a / b \)},
  \end{align}
  which is still smooth in $\interior{\Omega}$, so that the maximum principle is valid. Thus, the result in \autoref{prop:3}  holds in the $d \to 0$ limit for $a > b$.
\end{rmkplain}

\subsubsection{Weak Anchoring Condition}

In \autoref{prop:3}, we show the  perfect cross structure of the WORS on a square domain is lost as soon as we introduce geometrical anisotropy with $a \neq b$, at least with Dirichlet boundary conditions or strong anchoring. It is equally natural to ask if the WORS is an artefact of the strong anchoring condition  i.e. do we lose the cross structure as soon as we relax the boundary conditions on the edges of the square or do we have a continuous transition from the WORS to a cross-free profile, as the anchoring is relaxed by means of a suitably defined anchoring parameter? The `cross' structure refers to the fact that the WORS vanishes along the square diagonals and we refer to the nodal diagonal lines as a `cross'.

We work with $\mathbf{Q}$-tensors (defined in \autoref{eq:Q_3D_2D}) that are independent of the vertical dimension, on a square well $\Omega$ with $a = b = 1$ and $L > 0$. The key difference is that we impose a Durand-Nobili surface energy \cite{nobili1992disorientation} on the lateral surfaces $\partial \Omega \times \lsb 0, h \rsb$ (instead of a Dirichlet uniaxial tangent condition) with 
\begin{align}
  f _{\textrm{S}} \( \mathbf{Q} \) = W \lmdl \( Q _{11} , Q _{12} \) - \frac{B}{2C} \mathbf{g} _{d} \rmdl ^{2},
\end{align}
for an anchoring coefficient $W > 0$, and $\mathbf{g} _{d} \( x, y \) = \tilde{\mathbf{g}} _{d} \( x / L, y / L \)$, $\( x, y \) \in \partial \Omega$. In the limit $W \to \infty$, we recover the strong anchoring condition in \autoref{eq:boundary_g}. 
Using the standard scalings, the rescaled LdG free energy is
\begin{align}
  & \frac{1}{h K} F \lsb \bar{\mathbf{Q}} \rsb \nonumber \\
  & := \int _{\tilde{\Omega}} \( \frac{1}{2} \lmdl \nabla \bar{\mathbf{Q}} \rmdl ^{2} + \frac{L ^{2}}{K} \( -\frac{B ^{2}}{4C} \tr{\bar{\mathbf{Q}} ^{2}} + \frac{C}{4} \( \tr{\bar{\mathbf{Q}} ^{2}} \) ^{2} \) - \frac{5}{432} \frac{B ^{4} L ^{2}}{C ^{3} K} \)
  + \int_{\partial \tilde{\Omega}} \frac{WL}{K} \lmdl \( \bar{Q} _{11} , \bar{Q} _{12} \) - \frac{B}{2C} \tilde{\mathbf{g}} _{d} \rmdl ^{2}.
\end{align}
Defining $\tilde{\mathbf{Q}}$ as in \autoref{eq:Qtilde} and $\epsilon$ as in \autoref{eq:epsilon}, we work with a reduced and rescaled LdG free energy of the form
\begin{align}
  \frac{4 C ^{2}}{B ^{2} h K} \tilde{F} \lsb \tilde{\mathbf{Q}} \rsb 
  & := \int_{\tilde{\Omega}} \( \lmdl \nabla \tilde{Q} _{11} \rmdl ^{2} + \lmdl \nabla \tilde{Q} _{12} \rmdl ^{2} + \frac{1}{\epsilon ^{2}} \( \tilde{Q} _{11} ^{2} + \tilde{Q} _{12} ^{2} - 1 \) ^{2} - \frac{32}{27} \frac{1}{\epsilon ^{2}} \) 
  + \tilde{W} \int _{\partial \tilde{\Omega}}  \lmdl \( \tilde{Q} _{11}, \tilde{Q} _{12} \) - \tilde{\mathbf{g}} _{d} \rmdl ^{2},
  \label{eq:LdG_rescaled_weak}
\end{align}
where the reduced anchoring coefficient is
\begin{align}
  \tilde{W} = \frac{WL}{K} = \frac{W}{\epsilon} \sqrt{\frac{4 C}{B ^{2} K}}.
\end{align}
We treat $B$, $C$, $K$ to be fixed material constants in this manuscript and hence, we have $\tilde{W} \propto W \epsilon ^{-1}$. Our aim is to investigate the effects of a mild relaxation of the strong anchoring conditions and hence, we assume that $W = \bigO{\epsilon}$ as $\epsilon \to \infty$ so that $\tilde{W} = \alpha$ for some positive constant $\alpha$ independent of $\epsilon$, for large values of $\epsilon$. 
We can readily prove existence of minimizers of \autoref{eq:LdG_rescaled_weak} under these assumptions from the direct methods in the calculus of variations and for sufficiently large $\epsilon$, we can follow the arguments in \cite{lamy2014bifurcation} and \cite{canevari2017order} to demonstrate the uniqueness of critical points of \autoref{eq:LdG_rescaled_weak} in this limit.
It is straightforward to show that minimizers of \autoref{eq:LdG_rescaled_weak} satisfy the following system of partial differential equations:
\begin{align}
  \label{eq:LdG_rescaled_weak_EL1}
  -\Delta \( \tilde{Q} _{11}, \tilde{Q} _{12} \) + \frac{1}{\epsilon ^{2}} \( \lmdl \( \tilde{Q} _{11}, \tilde{Q} _{12} \) \rmdl ^{2} - 1 \) \( \tilde{Q} _{11}, \tilde{Q} _{12} \) & = 0, & \tilde{\Omega}, \\
  \label{eq:LdG_rescaled_weak_EL2}
  \mathbf{\nu} _{\tilde{\Omega}}' \nabla \( \tilde{Q} _{11}, \tilde{Q} _{12} \) + \alpha \( \( \tilde{Q} _{11}, \tilde{Q} _{12} \) - \tilde{\mathbf{g}} _{d} \) & = 0, & \partial \tilde{\Omega},
\end{align} 
where $\mathbf{\nu} _{\tilde{\Omega}}$ is the normal to the edges of the re-scaled rectangular domain, $\tilde{\Omega}$. 
Following the methods in \cite{canevari2017order}, we can prove the existence of a Well Order Reconstruction Solution (WORS) for the system \autoref{eq:LdG_rescaled_weak_EL1} and \autoref{eq:LdG_rescaled_weak_EL2} with $\tilde{Q} _{12} = 0$ everywhere and $\tilde{Q} _{11} = 0$ on the square diagonals and the cross-structure is preserved.

Under these assumptions, we can also check that minimizers of \autoref{eq:LdG_rescaled_weak} converge uniformly to minimizers of the following limiting energy as $\epsilon \to \infty$:
\begin{align}
  \frac{4 C ^{2}}{B ^{2} h K} \tilde{F} _{\infty} \lsb \tilde{\mathbf{Q}} _{\infty} \rsb 
  & := \int_{\tilde{\Omega}} \( \lmdl \nabla \( \tilde{Q} _{\infty} \) _{11} \rmdl ^{2} + \lmdl \nabla \( \tilde{Q} _{\infty} \) _{12} \rmdl ^{2} \) + \alpha \int _{\partial \tilde{\Omega}}  \lmdl \( \( \tilde{Q} _{\infty} \) _{11}, \( \tilde{Q} _{\infty} \) _{12} \) - \tilde{\mathbf{g}} _{d} \rmdl ^{2}.
  \label{eq:LdG_rescaled_weak_limit_infinity}
\end{align} 
We can compute the limiting profile, $\tilde{\mathbf{Q}} _{\infty}$, exactly in terms of its matrix components $\( \tilde{Q} _{\infty} \) _{11}$ and $\( \tilde{Q} _{\infty} \) _{12}$, as shown below.
We define the following function $J$ for $q \in W ^{1, 2} \( \tilde{\Omega} \) \cap C ^{\infty} \( \interior{\tilde{\Omega}} \)$, $\tau \in \mathbb{R}$, $g \in C ^{\infty} \( \partial \tilde{\Omega} \)$,
\begin{align}
  J \( q; \tau, g \) := \int _{\tilde{\Omega}} \lmdl \nabla q \rmdl ^{2} + \int _{\partial \tilde{\Omega}} \tau \( q - g \) ^{2},
\end{align}
If $q$ is a critical point (e.g., local minimizer) of the energy $J$, then we necessarily have,
\begin{align}
  \label{eq:weakanchoring_q}
  \left\{
  \begin{array}{rr}
    \Delta q = 0, & \tilde{\Omega}, \\
    \mathbf{\nu} _{\tilde{\Omega}} \cdot \nabla q + \tau \( q - g \) = 0, & \partial \tilde{\Omega}.
  \end{array}
  \right.
\end{align}
This is simply the Laplace equation on a rectangular domain with Robin boundary conditions, which can be solved by means of separation of variables.
It follows that
\begin{align}
  \( \tilde{Q} _{\infty} \) _{11} & = \arg \min _{q} J \( q; \tilde{W}, \( \tilde{g} _{d} \) _{1} \), 
  \label{eq:weakanchoring_EL_Q11} \\
  \( \tilde{Q} _{\infty} \) _{12} & = \arg \min _{q} J \( q; \tilde{W}, \( \tilde{g} _{d} \) _{2} \). 
  \label{eq:weakanchoring_EL_Q12}
\end{align} 
For $\tau > 0$, the boundary-value problem \autoref{eq:weakanchoring_q} has a unique solution which can be computed exactly. 

\begin{rmkplain}[Cross structure of $\tilde{\mathbf{Q}} _{\infty}$ with $\tilde{W} > 0$ for a square with $a = b = 1$]
  We have a unique solution $\( \tilde{Q} _{\infty} \) _{11}$ for \autoref{eq:weakanchoring_EL_Q11} and $\( \tilde{Q} _{\infty} \) _{12}$ for \autoref{eq:weakanchoring_EL_Q12} for $\tau > 0$. Since $\( \tilde{g} _{d} \) _{2} \equiv 0$, this yields $\( \tilde{Q} _{\infty} \) _{12} \equiv 0$ on $\tilde{\Omega}$. Using the antisymmetry of $\( \tilde{g} _{d} \) _{1}$, we have
  \begin{align}
    \( \tilde{g} _{d} \) _{1} \( x, y \) = -\( \tilde{g} _{d} \) _{1} \( y, x \) = -\( \tilde{g} _{d} \) _{1} \( 1 - y, 1 - x \), \quad \( x, y \) \in \partial \tilde{\Omega}
  \end{align}
  which implies that
  \begin{align}
    \( \tilde{Q} _{\infty} \) _{11} \( x, y \) = -\( \tilde{Q} _{\infty} \) _{11} \( y, x \) = -\( \tilde{Q} _{\infty} \) _{11} \( 1 - y, 1 - x \), \quad \( x, y \) \in \tilde{\Omega}.
  \end{align}
  Hence, we obtain
  \begin{align}
    \( \tilde{Q} _{\infty} \) _{11} \( x, x \) = 0 = \( \tilde{Q} _{\infty} \) _{11} \( x, 1 - x \), \quad \forall x \in \( 0, 1 \)
  \end{align} 
  from these relations quite easily. 
  This simple argument demonstrates that $\tilde{\mathbf{Q}} _{\infty}$ vanishes along the two square diagonals $y = x$ and $y = 1 - x$ and the cross structure is retained in the limit too, under the assumption that $\tilde{W}$ remains bounded as $\epsilon \to \infty$.
\end{rmkplain}

\begin{rmkplain}[Explicit limiting solution with weak anchoring]
  Next, we analytically solve the limiting problem \autoref{eq:weakanchoring_q} for $\( \tilde{Q} _{\infty} \) _{11}$. 
  In the  $d \to 0$ limit, the boundary function $\( g _{d} \) _{1}$ is either $+1$ or $-1$ on all four edges. Using ideas similar to \autoref{prop:2}, we can write $\( \tilde{Q} _{\infty} \) _{11}$ as
  \begin{align}
    \( \tilde{Q} _{\infty} \) _{11} \( x, y \) = f \( x, b - y; a, b \) - f \( y, x; b, a \) + f \( x, y; a, b \) - f \( y, a - x; b, a \).
    \end{align}
  where the function $f \in W ^{1, 2} \( \tilde{\Omega} \) \cap C ^{\infty} \( \interior{\tilde{\Omega}} \)$ is a solution of the following boundary-value problem,
  \begin{align}
    \Delta f \( x, y; a, b \) = 0, & \quad \( x, y \) \in \tilde{\Omega}, \\
    \tau f \( x, y; a, b \) - \frac{\partial f \( x, y; a, b \)}{\partial y} = 0, & \quad y = 0, \\
    \tau f \( x, y; a, b \) + \frac{\partial f \( x, y; a, b \)}{\partial y} = \tau, & \quad y = b, \\
    \tau f \( x, y; a, b \) - \frac{\partial f \( x, y; a, b \)}{\partial x} = 0, & \quad x = 0, \\
    \tau f \( x, y; a, b \) + \frac{\partial f \( x, y; a, b \)}{\partial x} = 0, & \quad x = a.
  \end{align}
  We can solve for $f$ by separation of variables. Skipping all the technical details, we have
  \begin{align}
    & \quad f \( x, y; a, b \) = \sum _{k = 1} ^{\infty} E _{k} X _{k} \( x \) Y _{k} \( y \), \\
    & = \sum _{k = 1} ^{\infty} \( \frac{2}{p _{k} ^{2} a + \tau ^{2} a + 2 \tau} \) \( p _{k} \cos \( p _{k} x \) + \tau \sin \( p _{k} x \) \) \nonumber \\
    & \quad \tau \frac{\cos \( p _{k} a \) \( p _{k} ^{2} + \tau ^{2} \) + \( p _{k} ^{2} - \tau ^{2} \)}{p _{k} \( p _{k} ^{2} - \tau ^{2} \)} \tau \frac{p _{k} \cosh \( p _{k} y \) + \tau \sinh \( p _{k} y \)}{\( p _{k} ^{2} + \tau ^{2} \) \sinh \( p _{k} b \) + 2 \tau p _{k} \cosh \( p _{k} b \)}.
    \label{eq:weakanchoring_f}
  \end{align}
  where $\lbk p _{k} \rbk$ is the set of solutions of the transcendental equation,
  \begin{align}
    \tan \( pa \) = \frac{2 \tau p}{p ^{2} - \tau ^{2}}.
  \end{align}
  (Also see \cite{walton2018nematic} for similar calculations.)
\end{rmkplain}


To summarise, we investigate the effects of both geometrical anisotropy ($a \neq b$) and relaxed but strong anchoring on the WORS in this section. We lose the cross structure of the WORS for $a \neq b$. However, for $W = \bigO{\epsilon}$ and for $\epsilon$ sufficiently large, we retain the cross structure of the WORS so that the WORS is not an artefact of the Dirichlet conditions. These properties are preserved in the limit since we can compute the limiting profiles, $\tilde{\mathbf{Q}}_\infty$ in both cases.  Of course, as $W$ decreases and $\tilde{W} \to 0$ for $W \ll \epsilon$, then $\tilde{\mathbf{Q}} _{\infty}$ can be any constant matrix as expected, since there are no boundary constraints in this limit. In \autoref{fig:Weak_Anchoring__LC_weakanchoring_N4_h64_64_t3t10}, we plot $\tilde{\mathbf{Q}} _{\infty}$ on a square with $\tilde{W} = 3$ and $\tilde{W} = 10$ respectively. The cross structure is necessarily more diffuse with smaller values of $\tilde{W}$. In \autoref{fig:Weak_Anchoring__LC_weakanchoring_N4_h96_64_t3t10}, we illustrate similar numerical results for a rectangle with $a = 1.5$, $b = 1$; the cross structure disappears as we see nodal lines of $\tilde{\mathbf{Q}} _{\infty}$ along the shorter rectangular edges. We refer to such solutions as BD solutions in the rest of the text, by analogy with similar terminology in \cite{wang2019order}.

\begin{figure}[ht]
  \centering
  \includegraphics[width = 0.5\textwidth] {./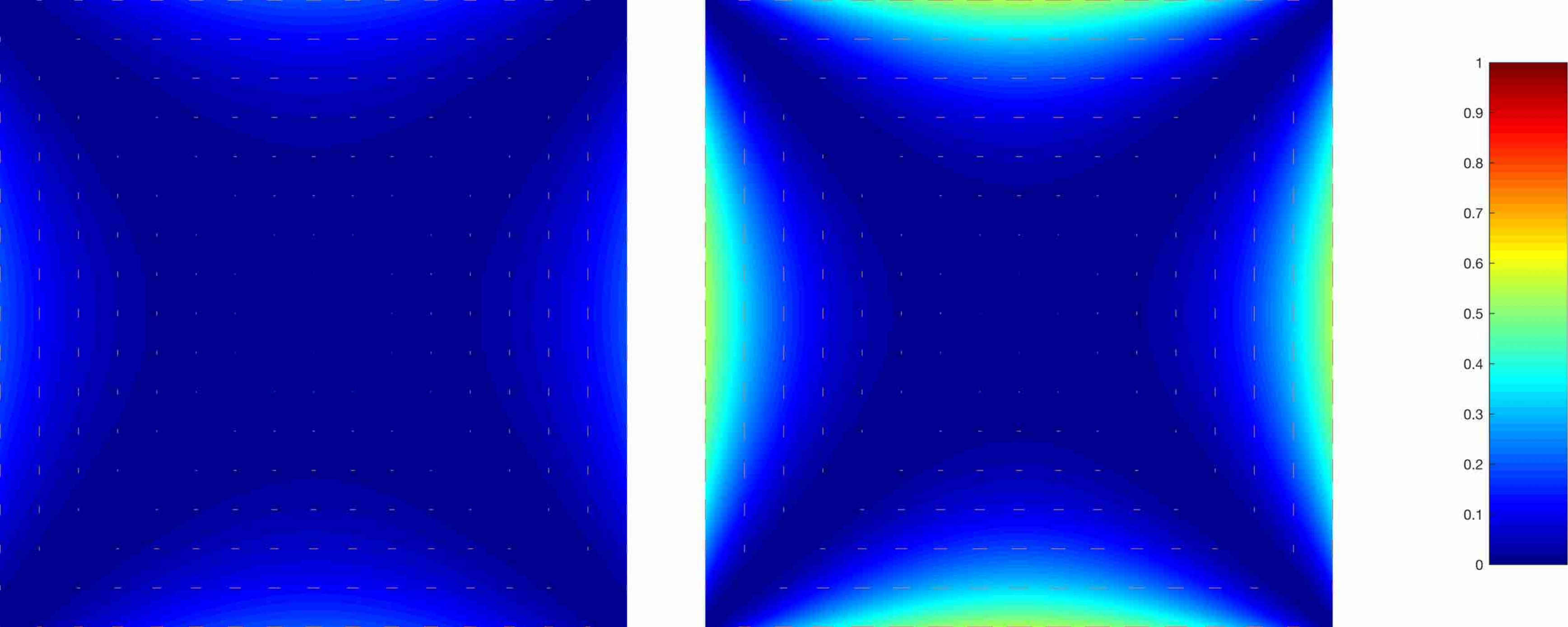} \\
  \caption{Solution of \autoref{eq:weakanchoring_q} for domain size $1 \times 1$, mesh spacing $h = 1 / 64$. Left: $\tilde{W} = 3$. Right: $\tilde{W} = 10$. The colour bar represents the value of $\tilde{s} _{\epsilon} ^{2} = \tr{\tilde{\mathbf{Q}} _{\epsilon} ^{2}} / 2$.}
  \label{fig:Weak_Anchoring__LC_weakanchoring_N4_h64_64_t3t10}
\end{figure}
\begin{figure}[ht]
  \centering
  \includegraphics[width = 0.75\textwidth] {./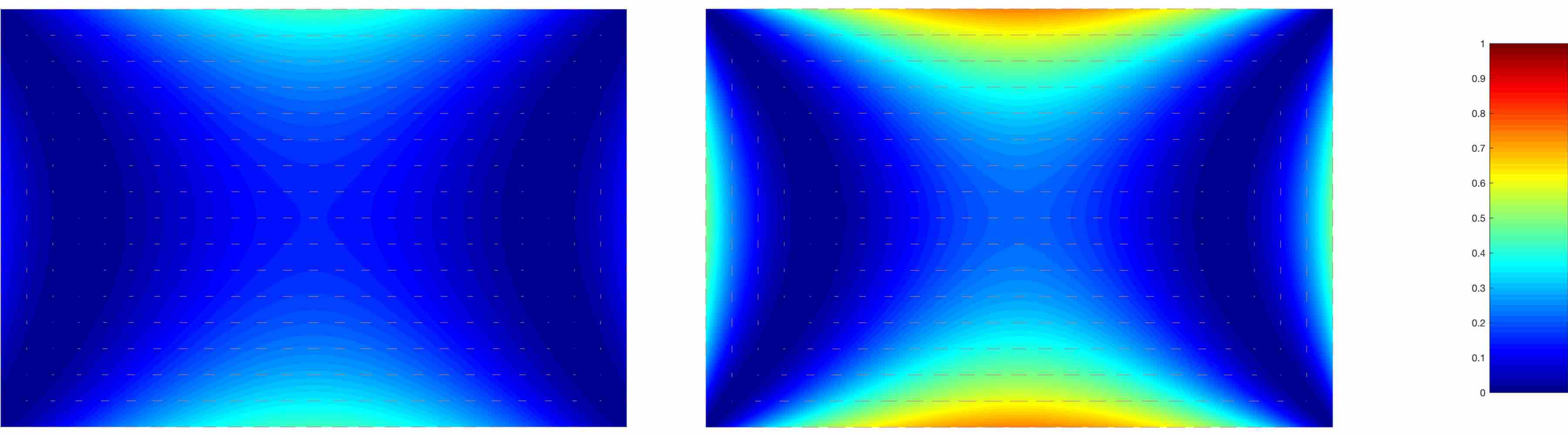} \\
  \caption{Solution of \autoref{eq:weakanchoring_q} for domain size $1.5 \times 1$, mesh spacing $h = 1 / 64$. Left: $\tilde{W} = 3$. Right: $\tilde{W} = 10$. The colour bar represents the value of $\tilde{s} _{\epsilon} ^{2} = \tr{\tilde{\mathbf{Q}} _{\epsilon} ^{2}} / 2$.}
  \label{fig:Weak_Anchoring__LC_weakanchoring_N4_h96_64_t3t10}
\end{figure}

\subsection{The \texorpdfstring{$\epsilon \to 0$}{epsilon to zero} Limiting Problem}
\label{sec:limits_OF}

Let $\tilde{\mathbf{Q}} _{\epsilon}$ be a global minimizer of the rescaled LdG free energy (\autoref{eq:LdG_2D_dimless}) subject to the fixed strong anchoring boundary conditions $\( \( \tilde{Q} _{\epsilon} \) _{11}, \( \tilde{Q} _{\epsilon} \) _{12} \) = \tilde{\mathbf{g}} _{d}$ on $\partial \tilde{\Omega}$ for given $d > 0$. For $\epsilon$ sufficiently small, in the interior, we have to leading order (more precisely, under suitable conditions, the error is $\bigO{\epsilon ^{2}}$ in $L ^{\infty}$ norm \cite{bethuel1993asymptotics}),
\begin{align}
  \tilde{\mathbf{Q}} _{\epsilon} \sim \( \uvec{n} \otimes \uvec{n} - \frac{\mathbf{I} _{2}}{2} \),
\end{align}
where $\uvec{n} = \( \cos \theta, \sin \theta \) \tp$ and $\theta$ is the solution of the Laplace equation
\begin{align}
  \label{eq:theta_EL}
  \Delta \theta \( x, y \) & = 0, \quad \( x, y \) \in \tilde{\Omega}.
\end{align}
subject to appropriately defined Dirichlet conditions \cite{lewis2015defects}, \cite{majumdar2010landau}. 

Hence, we can obtain good approximations to locally stable minimizers of \autoref{eq:LdG_2D_dimless} for sufficiently small $\epsilon$, by studying Dirichlet boundary-value problems for the angle $\theta$ on a rectangle as given below (we set $d = 0$ in the Dirichlet condition $\tilde{\mathbf{g}} _{d}$ for the sake of this computation), 
\begin{align}
  \label{eq:theta_EL_bdry}
  \theta \( x, 0 \) & = d _{1}, \quad \theta \( a, y \) = d _{2}, \quad \theta \( x, b \) = d _{3}, \quad \theta \( 0, y \) = d _{4},
\end{align}
where $d _{1}, d _{2}, d _{3}, d _{4}$ are arbitrary constants accounting for the constant boundary conditions on the edges of the rectangle. This is analogous to the approach in \cite{lewis2014colloidal}. 
In the case of a rectangle with $a > b$, it is known that there are three competing equilibria---the diagonal states for which $\uvec{n}$ is aligned along a diagonal of the rectangle, the rotated R1 and R2 states for which $\uvec{n}$ rotates by $\pi$ radians between a pair of parallel horizontal edges and the rotated R3, R4 states for which $\uvec{n}$ rotates by $\pi$ radians between a pair of parallel vertical edges. For $a > b$, the R3, R4 states have lower energies than the R1, R2 states (see \cite{tsakonas2007multistable}, \cite{lewis2014colloidal} for details). We can obtain good approximations to the diagonal and rotated solutions by studying solutions of \autoref{eq:theta_EL} subject to the boundary conditions enumerated in \autoref{tab:OF_boundarycondition}, which are consistent with \cite{luo2012multistability}. These boundary conditions enforce $\pm 1 / 4$ topological defects in the vector field, $\uvec{n}$, at the corners of $\tilde{\Omega}$. To be more precise, we say that a vertex defect has charge $t / 4$ ($t$ is an odd integer) when the director rotates by $2 \pi t / 4$ radians between a pair of adjacent edges; the sign being determined by the sense of the rotation. With this convention, the $\pm 1 / 4$ defects at the vertex corners are the weakest admissible defects and are referred to as trivial cases in \autoref{sec:nontrivial}.

For arbitrary constants $d _{1}, \cdots, d _{4}$, the solution of \autoref{eq:theta_EL} and \autoref{eq:theta_EL_bdry} can be written in terms of a function $f$, where $f$ is a solution of the following boundary-value problem,
\begin{align}
  \Delta f \( x, y; a, b \) & = 0, \\
  f \( x, 0; a, b \) & = 1, \\
  f \( a, y; a, b \) = f \( x, b; a, b \) = f \( 0, y; a, b \) & = 0.
\end{align}
The solution can be computed from \autoref{eq:prop3_f} in the limit $d / a \to 0$,
\begin{align}
  f \( x, y; a, b \)
  & = \sum _{k ~ \mathrm{odd}} \frac{4}{k \pi} \sin \frac{k \pi x}{a} \frac{\sinh \( k \pi \( b - y \) / a \)}{\sinh \( k \pi b / a \)},
\end{align}
with the symmetry $f \( x, y; a, b \) = f \( a - x, y; a, b \)$.
Then
\begin{align}
 \theta \( x, y \) & = d _{1} f \( x, y; a, b \)+ d _{2} f \( y, a - x; b, a \) + d _{3} f \( x, b - y; a, b \) + d _{4} f \( y, x; b, a \).
\end{align}
\begin{table}
  \centering
  \begin{tabular} {|c|c|c|c|c|c|}
    \hline
    state & shape & $d _{1}$ & $d _{2}$ & $d _{3}$ & $d _{4}$ \\
    \hline
    D1 & $\diagup$   & $0$ & $+\pi / 2$ & $0$ & $+\pi / 2$ \\
    \hline
    D2 & $\diagdown$ & $0$ & $-\pi / 2$ & $0$ & $-\pi / 2$ \\
    \hline
    R1 & $\subset$   & $0$ & $-\pi / 2$ & $-\pi$ & $-\pi / 2$ \\
    \hline
    R2 & $\supset$   & $0$ & $+\pi / 2$ & $+\pi$ & $+\pi / 2$ \\
    \hline
    R3 & $\cap$      & $0$ & $-\pi / 2$ & $0$ & $+\pi / 2$ \\
    \hline
    R4 & $\cup$      & $0$ & $+\pi / 2$ & $0$ & $-\pi / 2$ \\
    \hline
  \end{tabular}
  \caption{Boundary Condition of solutions}
  \label{tab:OF_boundarycondition}
\end{table}

The D1 state corresponds to
\begin{align}
  \theta _{\textrm{D1}} \( x, y \) = \frac{\pi}{2} \( f \( y, a - x; b, a \) + f \( y, x; b, a \) \), \quad \( x, y \) \in \( 0, a \) \times \( 0, b \), 
\end{align}
with the symmetries 
\begin{align}
  \theta _{\textrm{D1}} \( x, y \) = \theta _{\textrm{D1}} \( a - x, y \), \theta _{\textrm{D1}} \( x, y \) = \theta _{\textrm{D1}} \( x, b - y \).
\end{align}
Referring to \autoref{tab:OF_boundarycondition}, the R3 solution corresponds to
\begin{align}
  \theta _{\textrm{R3}} \( x, y \) = \frac{\pi}{2} \( - f \( y, a - x; b, a \) + f \( y, x; b, a \) \), \quad \( x, y \) \in \( 0, a \) \times \( 0, b \).
\end{align}
with the symmetries 
\begin{align}
  \theta _{\textrm{R3}} \( x, y \) = - \theta _{\textrm{R3}} \( a - x, y \), \theta _{\textrm{R3}} \( x, y \) = \theta _{\textrm{R3}} \( x, b - y \).
\end{align}
In both cases, it suffices to consider the solution of \autoref{eq:theta_EL} on the quadrant $\lsb 0, a / 2 \rsb \times \lsb 0, b / 2 \rsb$. The D1 solution is subject to the boundary conditions,
\begin{align}
  \theta _{\textrm{D1}} \( x, 0 \) & = 0, \\
  \theta _{\textrm{D1}} \( 0, y \) & = \pi / 2,
\end{align}
whereas the R3 solution is subject to,
\begin{align}
  \theta _{\textrm{R3}} \( x, 0 \) & = 0, \\
  \theta _{\textrm{R3}} \( a / 2, y \) & = 0, \\
  \theta _{\textrm{R3}} \( 0, y \) & = \pi / 2.
\end{align}
Thus, the R3 solution is a minimizer of the Dirichlet energy
\begin{align}
  \int _{\lsb 0, a / 2 \rsb \times \lsb 0, b / 2 \rsb} \lmdl \nabla \theta \( x, y \) \rmdl ^{2} \, \textrm{d} x \textrm{d} y,
\end{align}
in a smaller space than the D1 solution, which is a minimizer of the Dirichlet energy in a larger space, and hence the D1 solutions cannot have higher energies than the competing R3 solutions. This gives a simple explanation of the fact that diagonal states are observed more frequently in experiments (see \cite{lewis2014colloidal}, \cite{tsakonas2007multistable}) since they have lower energies than their rotated counterparts, without the need for any elaborate calculations.

\section{Bifurcation Diagrams as a Function of \texorpdfstring{$\epsilon$}{epsilon} and Rectangular Anisotropy}
\label{sec:bifurcation}

In \cite{robinson2017molecular}, the authors extensively discuss the reduced LdG equilibria on a square domain for low temperatures, as a function of the square length $L$.
They numerically demonstrate that the WORS (featured by a vanishing $\mathbf{Q}$-tensor along the square diagonals; see \autoref{eq:Q_s_n}) is the unique solution for $L$ small enough, its bifurcation into stable diagonal branches as $L$ increases, a further bifurcation into two unstable BD branches (defined as solutions with $Q _{12} = 0$ but without the property of $Q _{11} = Q _{12} = 0$ at the square centre) and then a tracking of how the unstable BD branches bifurcate into unstable rotated solutions and the rotated states gain stability as $L$ increases. For $L$ large enough, there are six distinct stable solution branches---the D1 and D2 solutions corresponding to the two different diagonal solutions and four energetically degenerate R1, R2, R3, R4 solutions on a square. The rotated solutions are related to each other by a $\pi / 2$ rotation on a square and are hence, energetically degenerate. As expected, we lose the degeneracy as soon as we break the symmetry of the square.

We follow the same paradigm on rectangular domains $\tilde{\Omega}$, to track solution branches as a function of the rescaled parameter $\epsilon$ (which is inversely proportional to the rectangular size) and the geometrical anisotropy, measured in terms of $a$ for fixed $b = 1$. We use arc-continuation methods \cite{kelley2018numerical} to track different solution branches and it is possible that this method does not locate all solution branches, or misses some high energy unstable branches. However, this is a standard and well-accepted method for computing bifurcation diagrams for solutions of systems of partial differential equations.

We numerically find eight different kinds of solutions on a rectangle ---the D1, D2, R1, R2, R3, R4 solutions as enumerated in the \autoref{eq:boundary_g} along with two solutions (only relevant for large $\epsilon$), labelled as BD1 and BD2. These solutions are described by their corresponding $\tilde{\mathbf{Q}}$-tensors, e.g., $\tilde{\mathbf{Q}} _{\mathrm{D1}}$, $\tilde{\mathbf{Q}} _{\mathrm{D2}}$, $\ldots$ for a given value of $\epsilon$. The BD1 and BD2 solutions are special since $\( \tilde{Q} _{\mathrm{BD1}} \) _{12} = \( \tilde{Q} _{\mathrm{BD2}} \) _{12} \equiv 0 $ which implies that the corresponding $\uvec{n}$ is either $\uvec{n} = \( 1, 0 \)$ or $\uvec{n} = \( 0, 1 \)$ (see \autoref{eq:Q_s_n}) everywhere in the rectangular interior. BD1 refer to BD solutions that have $\uvec{n} = \( 0, 1 \)$ at the centre of the rectangle with $\uvec{n} = \( 1, 0 \)$ near the horizontal edges, $y = 0$ and $y = b$. BD2 refer to BD solutions that have $\uvec{n} = \( 1, 0 \)$ at the centre of the domain with $\uvec{n} = \( 0, 1 \)$ near the vertical edges, $x = 0$ and $x = a$. For $a = b$, the solutions BD1 and BD2 are energetically degenerate whereas for $a > b$, one can heuristically see that BD1 is unfavourable compared to BD2 since the transition layers for BD1 are located along the longer edges, $y = 0$ and $y = b$. We point out that BD2 is the limiting profile, $\tilde{\mathbf{Q}} _{\infty}$ on a rectangle $\tilde{\Omega}$ with $a > b$ discussed in \autoref{sec:strong}. On similar grounds, we lose the degeneracy between the rotated states and the R1 and R2 states have higher energies than the R3, R4 states for $a > b$. We also study the stability of the solutions, and use prefixes `s' and `u' for stable and unstable solutions respectively. We test the stability ($L ^{2}$-stability) of a solution by computing the smallest eigenvalue of the second variation of the discrete LdG free energy functional \cite{wiki:Hessian}, and if the smallest eigenvalue is positive, the solution is locally stable.

We use the notation $E _{\textrm{c}} \( \epsilon, a \)$ to denote the Landau-de Gennes energy of the solution in class `c' at given $\epsilon$ and geometrical aspect ratio $\delta = \frac{a}{b} = a$ (since $b = 1$ in our simulations).
We plot the energies of the solutions of the Euler-Lagrange equation with respect to the parameter $\epsilon$, see \autoref{fig:Bifurcation__LC_bifurcation_N4_h64_64_120945_E}, \autoref{fig:Bifurcation__LC_bifurcation_N4_h80_64_120945_E} and \autoref{fig:Bifurcation__LC_bifurcation_N4_h96_64_120945_E}. For small values of $\epsilon$ and $a > b$, there are three competing stable solutions, sD1, sR2 and sR3, and the energies are ordered as $E _{\textrm{sD1}} \( \epsilon, a \) < E _{\textrm{sR3}} \( \epsilon, a \) \leq E _{\textrm{sR2}} \( \epsilon, a \)$, which is consistent with the results in \cite{lewis2014colloidal}.

We track three different solution pathways, with three distinct initial conditions sD1, sR3 and sR2 respectively, for small values of $\epsilon$. We discuss each pathway separately.
As $\epsilon$ increases, the solution sD1 transitions to sBD2, which is the unique limiting solution described in \autoref{prop:1}; we do not observe any unstable solutions in this pathway and the qualitative features seem to be independent of the geometrical anisotropy $a$. We define the parameter, $\epsilon _{\textrm{sD1} \to \textrm{sBD2}} \( a \)$, be the value of $\epsilon$ where this transition occurs, defined by $\( \tilde{Q} _{\textrm{sBD2}} \) _{12} \equiv 0 $. As $a$ increases from $1$, $\epsilon _{\textrm{sD1} \to \textrm{sBD2}} \( a \)$ decreases. We illustrate this transition in \autoref{fig:sD1_sBD2}.

\begin{figure}[ht]
  \centering
  \includegraphics[width = 0.5\textwidth] {./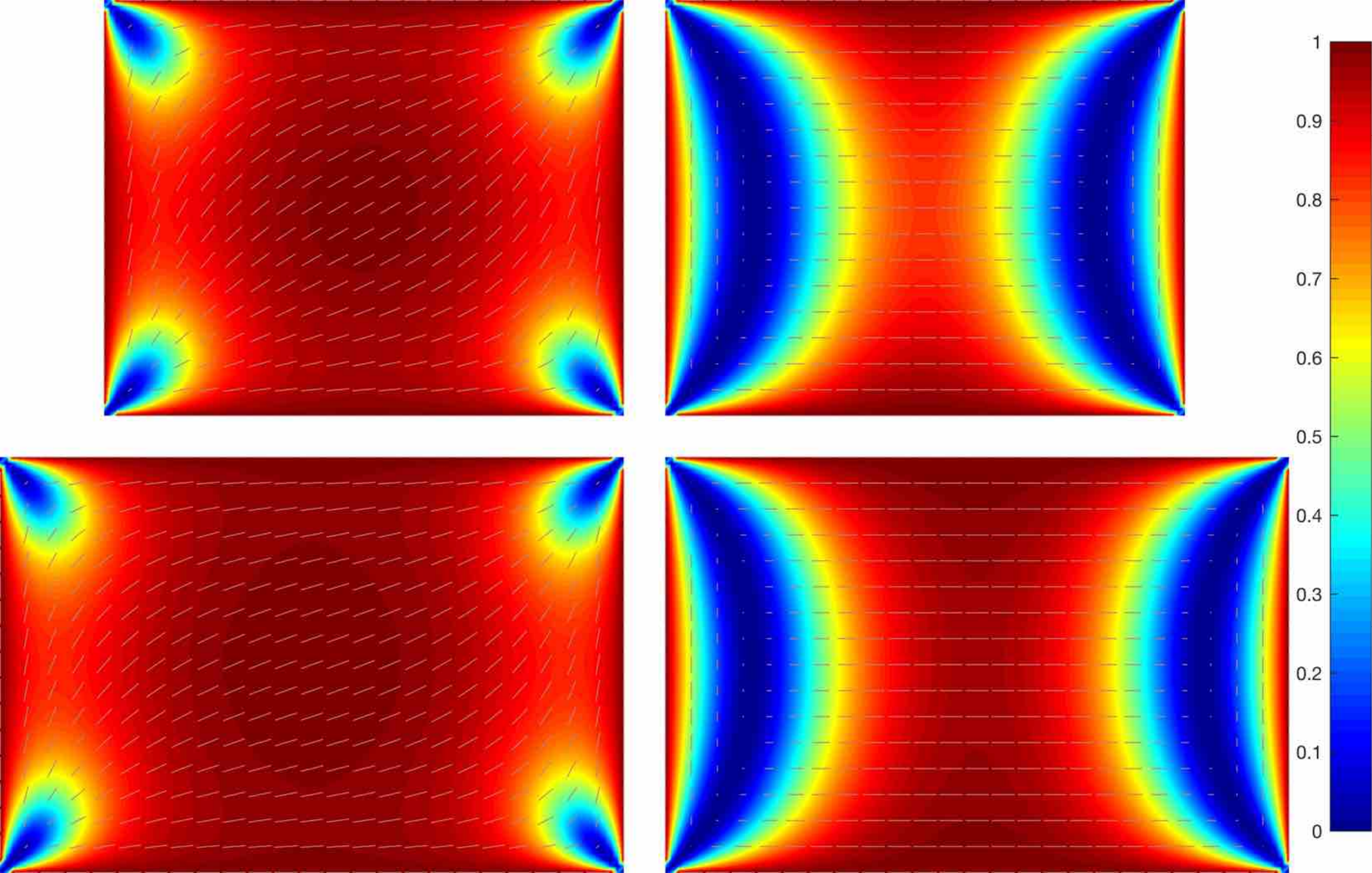} \\
  \caption{Top, from left to right: sD1 and sBD2 at domain size $1.25 \times 1$, which are marked as `A' and `B' in \autoref{fig:Bifurcation__LC_bifurcation_N4_h80_64_120945_E} and \autoref{fig:Bifurcation__LC_bifurcation_N4_h80_64_120945_Q}. Bottom, from left to right: sD1 and sBD2 at domain size $1.5 \times 1$, which are marked as `A' and `B' in \autoref{fig:Bifurcation__LC_bifurcation_N4_h96_64_120945_E} and \autoref{fig:Bifurcation__LC_bifurcation_N4_h96_64_120945_Q}. Mesh spacing is $h = 1 / 64$. We fix $\( \tilde{Q} _{11}, \tilde{Q} _{12} \) = \tilde{\mathbf{g}} _{d}$ on the boundary with $d = 0.03$. The colour bar represents the value of $\tilde{s} _{\epsilon} ^{2} = \tr{\tilde{\mathbf{Q}} _{\epsilon} ^{2}} / 2$.}
  \label{fig:sD1_sBD2}
\end{figure}

The second pathway has the competing stable rotated solution, sR3, as initial condition, and we gradually increase $\epsilon$ and track the corresponding changes. The second pathway is different from the first pathway described above. We observe the following sequence of transitions from sR3 to uR3, to uBD2, and finally to sBD2 and the qualitative features are independent of $a$. We can similarly define the parameters $\epsilon _{\textrm{sR3} \to \textrm{uR3}} \( a \) < \epsilon _{\textrm{uR3} \to \textrm{uBD2}} \( a \) < \epsilon _{\textrm{uBD2} \to \textrm{sBD2}} \( a \)$. As $a$ increases from $1$, $\epsilon _{\textrm{sR3} \to \textrm{uR3}} \( a \)$ and $\epsilon _{\textrm{uR3} \to \textrm{uBD2}} \( a \)$ increase, and $\epsilon _{\textrm{uBD2} \to \textrm{sBD2}} \( a \) \equiv \epsilon _{\textrm{sD1} \to \textrm{sBD2}} \( a \)$ decreases. As $a \to \infty$, the differences among these three critical parameters tend to zero. We illustrate this sequence of structural transitions in \autoref{fig:sR3_sBD2}.

\begin{figure}[ht]
  \centering
  \includegraphics[width = \textwidth] {./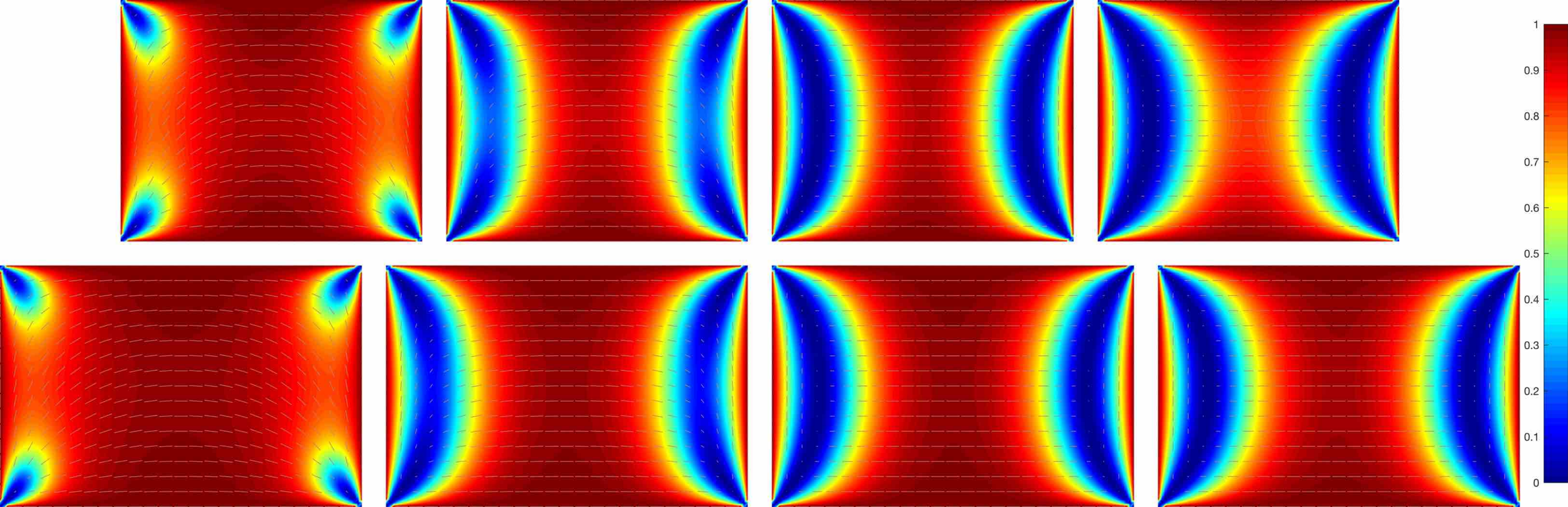} \\
  \caption{Top, from left to right: sR3, uR3, uBD2 and sBD2 at domain size $1.25 \times 1$, which are marked as `G', `H', `I' and `B' in \autoref{fig:Bifurcation__LC_bifurcation_N4_h80_64_120945_E} and \autoref{fig:Bifurcation__LC_bifurcation_N4_h80_64_120945_Q}. Bottom, from left to right: sR3, uR3, uBD2 and sBD2 at domain size $1.5 \times 1$, which are marked as `G', `H', `I' and `B' in \autoref{fig:Bifurcation__LC_bifurcation_N4_h96_64_120945_E} and \autoref{fig:Bifurcation__LC_bifurcation_N4_h96_64_120945_Q}. Mesh spacing is $h = 1 / 64$. We fix $\( \tilde{Q} _{11}, \tilde{Q} _{12} \) = \tilde{\mathbf{g}} _{d}$ on the boundary with $d = 0.03$. The colour bar represents the value of $\tilde{s} _{\epsilon} ^{2} = \tr{\tilde{\mathbf{Q}} _{\epsilon} ^{2}} / 2$.}
  \label{fig:sR3_sBD2}
\end{figure}

The third pathway is quite different, for which we use sR2 as initial condition for $\epsilon$ small enough and then gradually increase the parameter $\epsilon$. In fact, the qualitative features of the pathway also depend on the value of $a > 1$.
At small $a$, e.g. $a = 1.25$, the solution sR2 changes to uR2, then to uBD1, and finally the solution branch disappears at an end point. We can similarly define the parameters $\epsilon _{\textrm{sR2} \to \textrm{uR2}} \( a \) < \epsilon _{\textrm{uR2} \to \textrm{uBD1}} \( a \) < \epsilon _{\textrm{end}} \( a \)$, such that we cannot find further solutions on this branch by arc-continuation methods for $\epsilon > \epsilon _{\textrm{end}} \( a \)$. At intermediate $a$, e.g., $a = 1.5$, the solution sR2 transitions to uR2 and the solution branch disappears on increasing $\epsilon$, and we have $\epsilon _{\textrm{sR2} \to \textrm{uR2}} \( a \) < \epsilon _{\textrm{end}} \( a \)$. There is no intermediate transition to uBD1 for intermediate values of $a > 1$. At larger $a$, we observe the termination of the branch at sR2. As $a$ increases from $1$, $\epsilon _{\textrm{end}} \( a \) < \epsilon _{\textrm{sD1} \to \textrm{sBD2}} \( a \)$ decreases. We illustrate these structural transition pathways in \autoref{fig:sR2_uBD1}, for different values of $a$.

We plot bifurcation diagrams for the reduced LdG equilibria for three values of $a = 1, 1.25, 1.5$ respectively, following the same measures as in \cite{robinson2017molecular}. The figures are labelled as  \autoref{fig:Bifurcation__LC_bifurcation_N4_h64_64_120945_Q}, \autoref{fig:Bifurcation__LC_bifurcation_N4_h80_64_120945_Q} and \autoref{fig:Bifurcation__LC_bifurcation_N4_h96_64_120945_Q}.
These plots illustrate the distinct solution pathways and changes in the number of solutions as $\epsilon$  decreases, and the sensitivity of the solution landscape to the aspect ratio $a$, although more exhaustive studies are needed to this effect. We use the quantities $\tilde{\Omega} ^{-1} \int _{\tilde{\Omega}} \tilde{Q} _{11} ^{2}$, $\tilde{\Omega} ^{-1} \int _{\tilde{\Omega}} \tilde{Q} _{12} ^{2}$ and $\tilde{\Omega} ^{-1} \int _{\tilde{\Omega}} s ^{2}$ in \autoref{fig:Bifurcation__LC_bifurcation_N4_h72_64_300121_largeepsilon_Q12}, from which we deduce that the structural transition from sD1 to sBD2 is a first order structural transition i.e. discontinuous as a function of $\epsilon$.

The pertinent question is---how are the solution landscapes on a rectangle different from the solution landscapes on a square? There are no striking differences except perhaps that the sR2 solution pathway is not connected to the sD1 and sR3 pathways directly for $a > 1$. Similarly, we do not find a sBD1 state for a rectangle even for smaller values of $a$, for which the energetic penalty of the transition layers along the longer horizontal edges is lower. It is interesting that sD1 and uBD2 transition to sBD2 at the same value of $\epsilon$ on two distinct solution pathways. It is worth noting that BD states are always unstable on a square and this is an illuminating example of how geometrical anisotropy can stabilise BD states on a rectangle. Further, the solution pathways do not appear to be highly sensitive to $a$, although this could be a limitation of the numerical methods. Further, as $a$ increases from $1$, and for small $\epsilon$, $E _{\textrm{sD1}} \( \epsilon, a \)$ and $E _{\textrm{sR2}} \( \epsilon, a \)$ increase, while $E _{\textrm{sR3}} \( \epsilon, a \)$ decreases. When $a = 1$, $E _{\textrm{sR3}} \( \epsilon, a \) = E _{\textrm{sR2}} \( \epsilon, a \)$. As $a \to \infty$, $E _{\textrm{sR3}} \( \epsilon, a \) \to E _{\textrm{sD1}} \( \epsilon, a \)$, for any fixed and small $\epsilon$, which agrees with the results in \cite{lewis2014colloidal}. An open question pertains to how a system would transition from the sR2 solution  to the sBD2 solution as $\epsilon$ increases, since sBD2 is the unique solution for $\epsilon$ large enough? 
\begin{figure}[ht]
  \centering
  \includegraphics[width = \textwidth] {./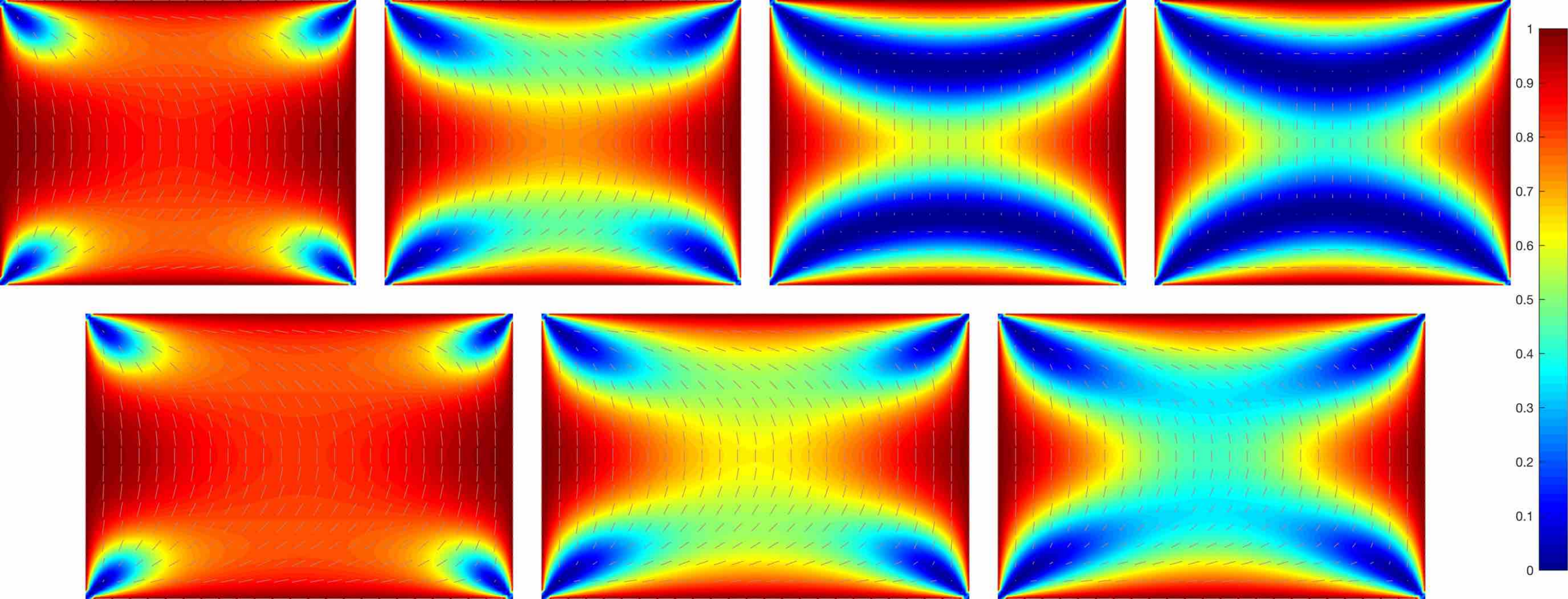} \\
  \caption{Top, from left to right: sR2, uR2, uBD1 and end (uBD1) at domain size $1.25 \times 1$, which are marked as `C', `D', `E' and `F' in \autoref{fig:Bifurcation__LC_bifurcation_N4_h80_64_120945_E} and \autoref{fig:Bifurcation__LC_bifurcation_N4_h80_64_120945_Q}. Bottom, from left to right: sR2, uR2 and end (uR2) at domain size $1.5 \times 1$, which are marked as `C', `D' and `F' in \autoref{fig:Bifurcation__LC_bifurcation_N4_h96_64_120945_E} and \autoref{fig:Bifurcation__LC_bifurcation_N4_h96_64_120945_Q}. Mesh spacing is $h = 1 / 64$. We fix $\( \tilde{Q} _{11}, \tilde{Q} _{12} \) = \tilde{\mathbf{g}} _{d}$ on the boundary with $d = 0.03$. The colour bar represents the value of $\tilde{s} _{\epsilon} ^{2} = \tr{\tilde{\mathbf{Q}} _{\epsilon} ^{2}} / 2$.}
  \label{fig:sR2_uBD1}
\end{figure}

\begin{figure}[ht]
  \centering
  \begin{subfigure}[t]{0.49\textwidth}
    \includegraphics[width = \textwidth] {./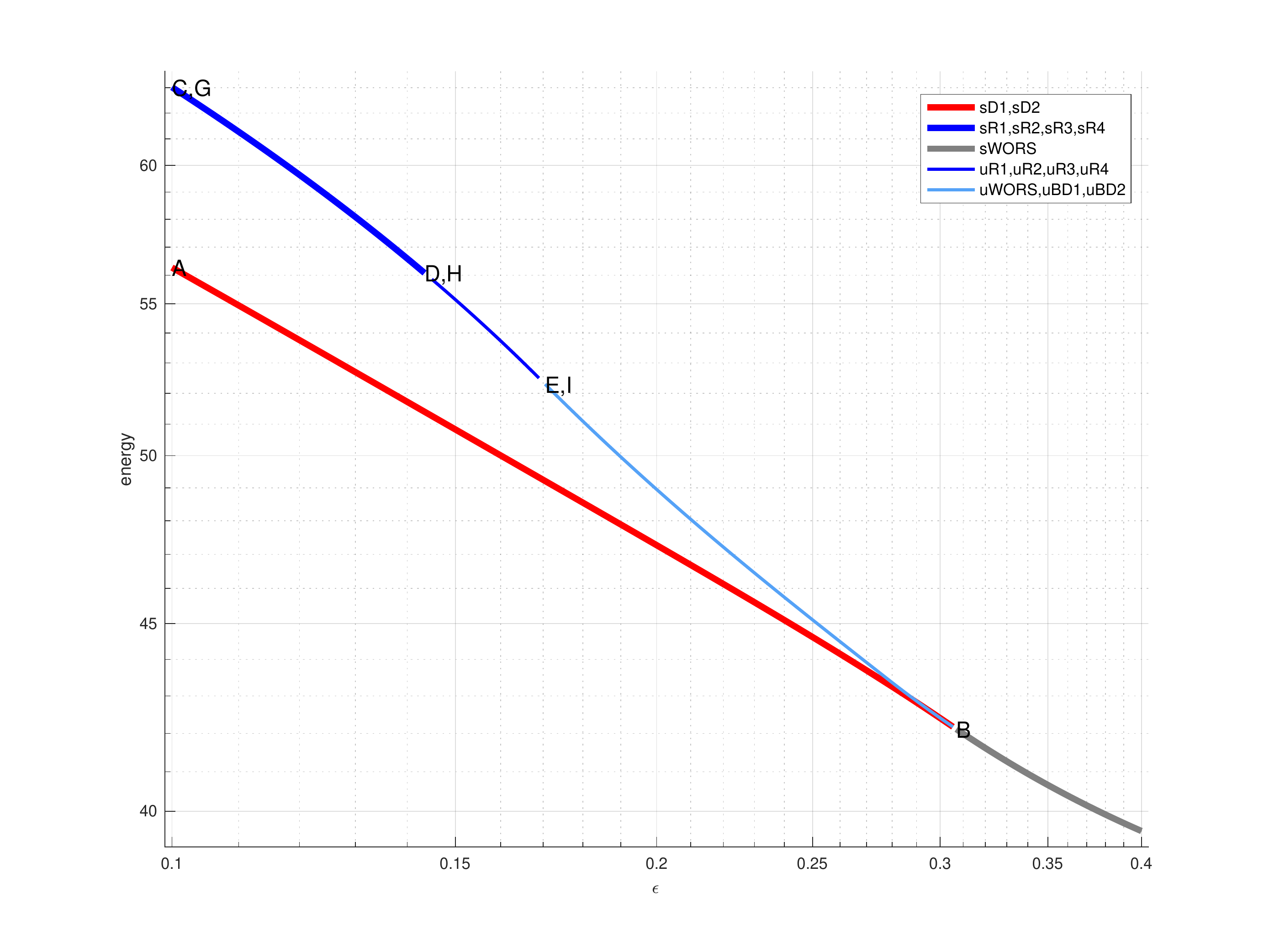} \\
    \caption{}
    \label{fig:Bifurcation__LC_bifurcation_N4_h64_64_120945_E}
  \end{subfigure}
  \begin{subfigure}[t]{0.49\textwidth}
    \includegraphics[width = \textwidth] {./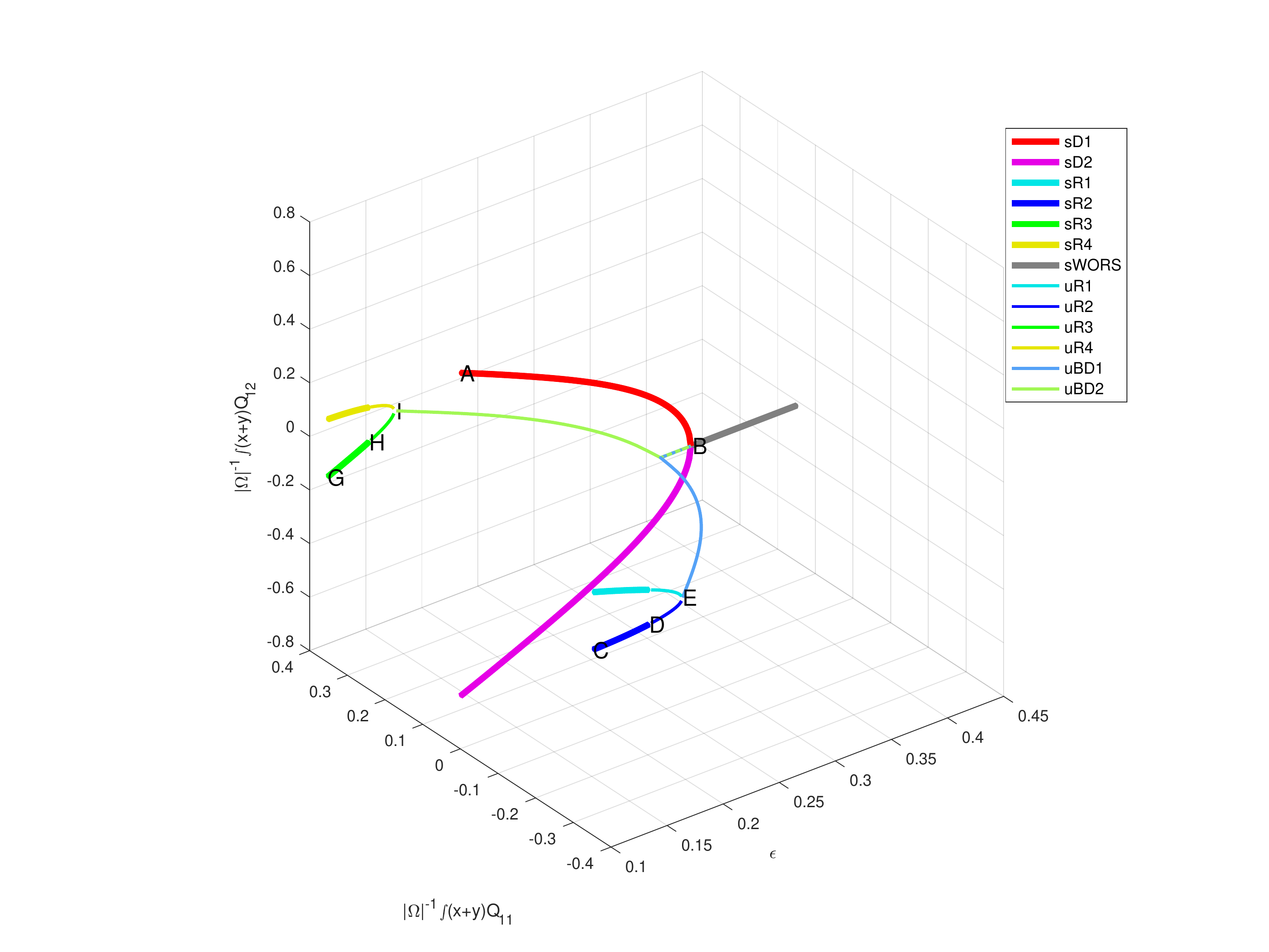} \\
    \caption{}
    \label{fig:Bifurcation__LC_bifurcation_N4_h64_64_120945_Q}
  \end{subfigure}
  \\
  \begin{subfigure}[t]{0.49\textwidth}
    \includegraphics[width = \textwidth] {./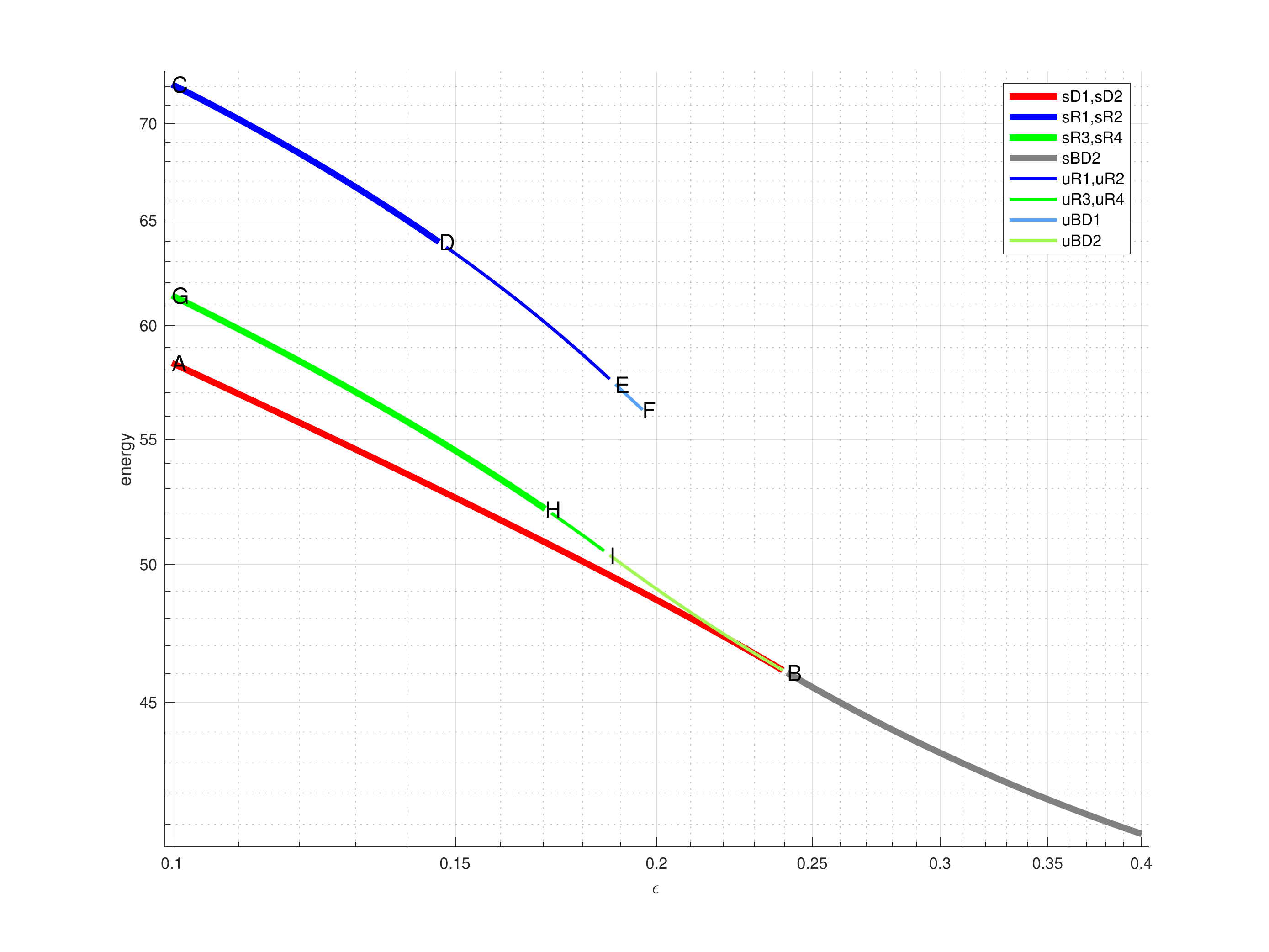} \\
    \caption{}
    \label{fig:Bifurcation__LC_bifurcation_N4_h80_64_120945_E}
  \end{subfigure}
  \begin{subfigure}[t]{0.49\textwidth}
    \includegraphics[width = \textwidth] {./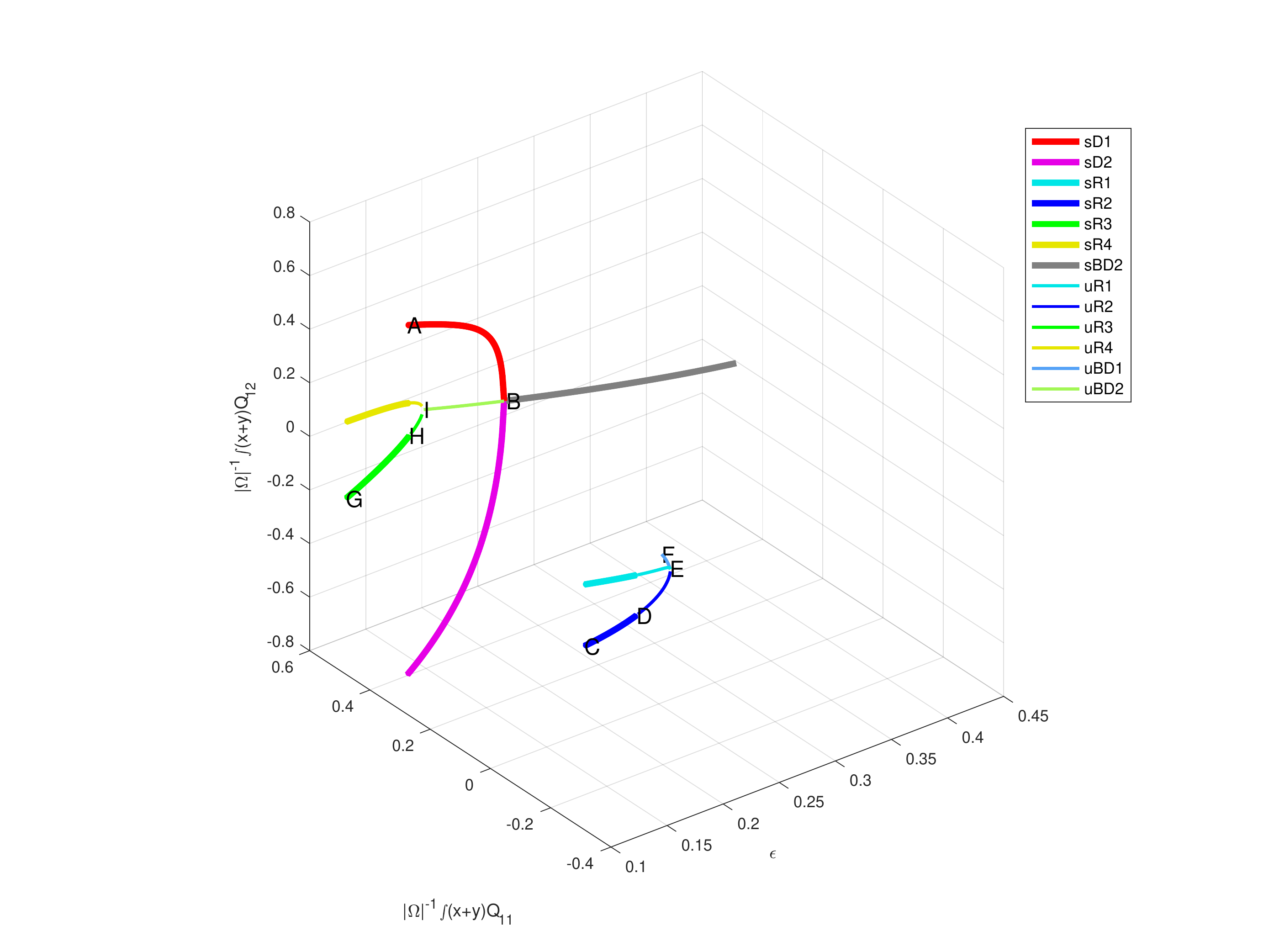} \\
    \caption{}
    \label{fig:Bifurcation__LC_bifurcation_N4_h80_64_120945_Q}
  \end{subfigure}
  \\
  \begin{subfigure}[t]{0.49\textwidth}
    \includegraphics[width = \textwidth] {./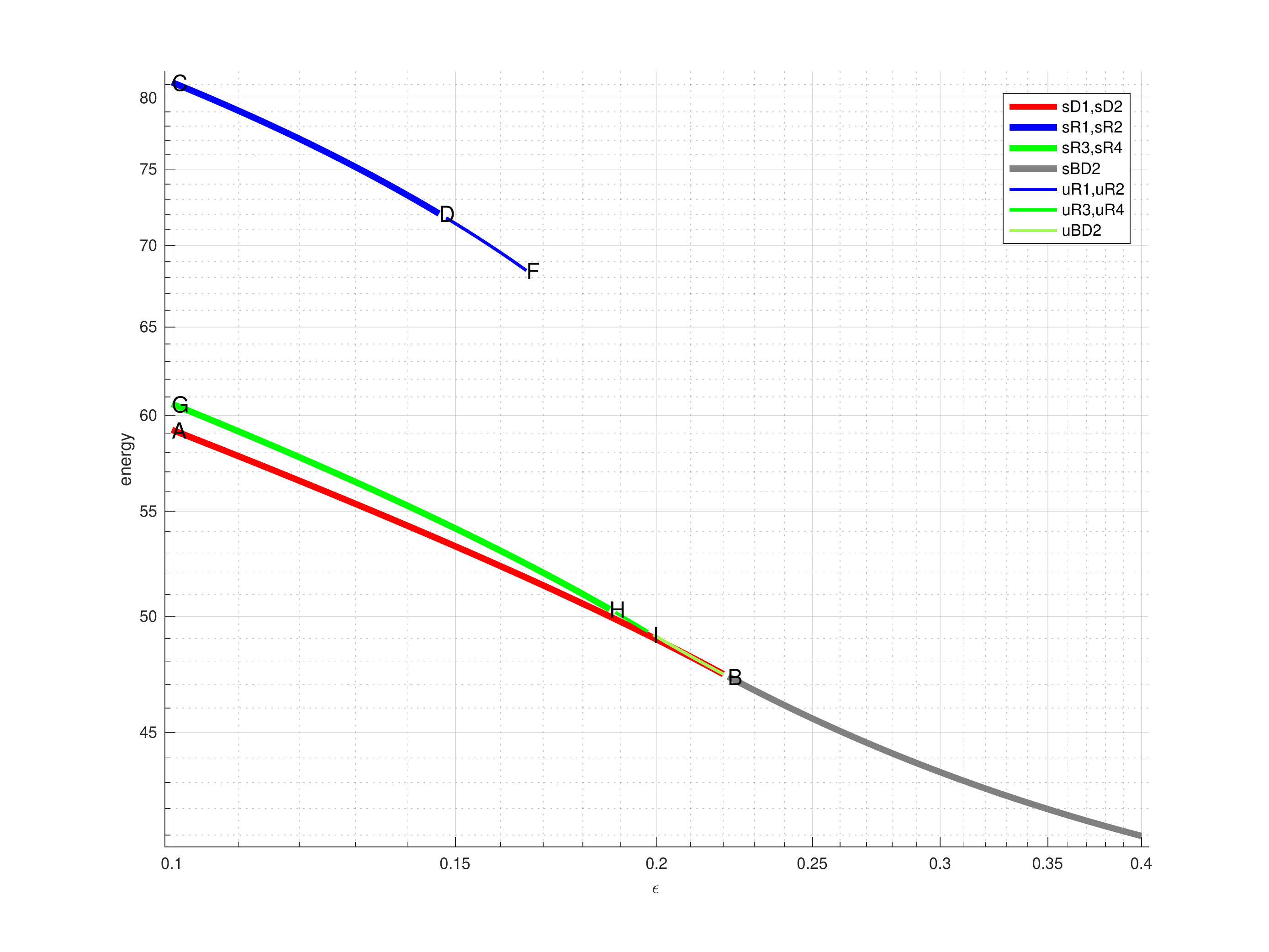} \\
    \caption{}
    \label{fig:Bifurcation__LC_bifurcation_N4_h96_64_120945_E}
  \end{subfigure}
  \begin{subfigure}[t]{0.49\textwidth}
    \includegraphics[width = \textwidth] {./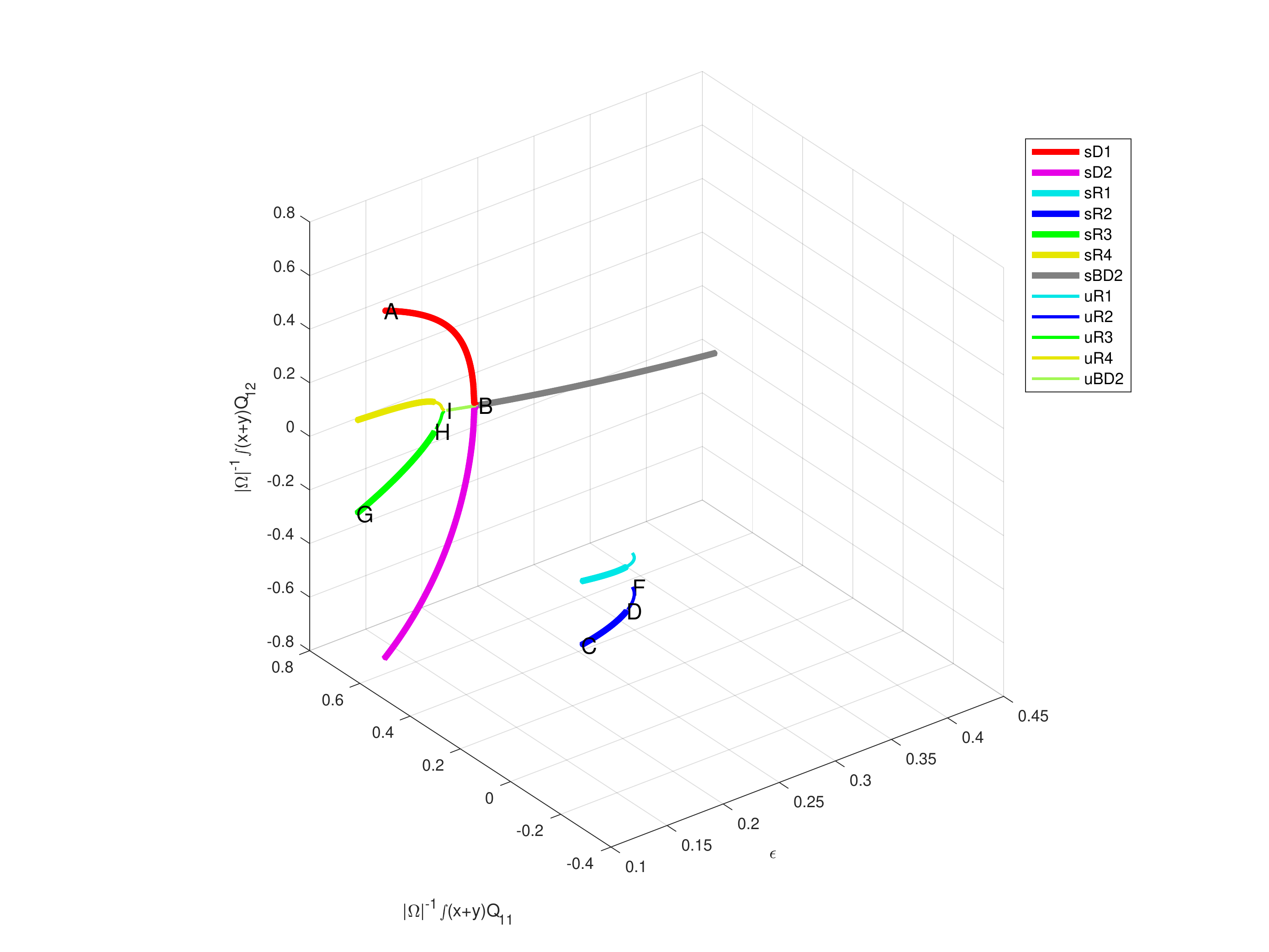} \\
    \caption{}
    \label{fig:Bifurcation__LC_bifurcation_N4_h96_64_120945_Q}
  \end{subfigure}
  \caption{Bifurcations in the Landau-de Gennes (LdG) model. Mesh spacing is $h = 1 / 64$. Left column: the LdG free energy versus $\epsilon$. Right column: relation among $\epsilon$, $\fint _{\tilde{\Omega}} \( \tilde{x} + \tilde{y} \) \tilde{Q} _{11} \( \tilde{x}, \tilde{y} \)$ and $\fint _{\tilde{\Omega}} \( \tilde{x} + \tilde{y} \) \tilde{Q} _{12} \( \tilde{x}, \tilde{y} \)$. Top row: domain size is $1 \times 1$. Middle row: domain size is $1.25 \times 1$. Bottom row: domain size is $1.5 \times 1$. See snapshots marked by capital letters in \autoref{fig:sD1_sBD2}, \autoref{fig:sR3_sBD2} and \autoref{fig:sR2_uBD1}.}
  \label{fig:Bifurcation}
\end{figure}

\begin{figure}[ht]
  \centering
  \includegraphics[width = 0.5\textwidth] {./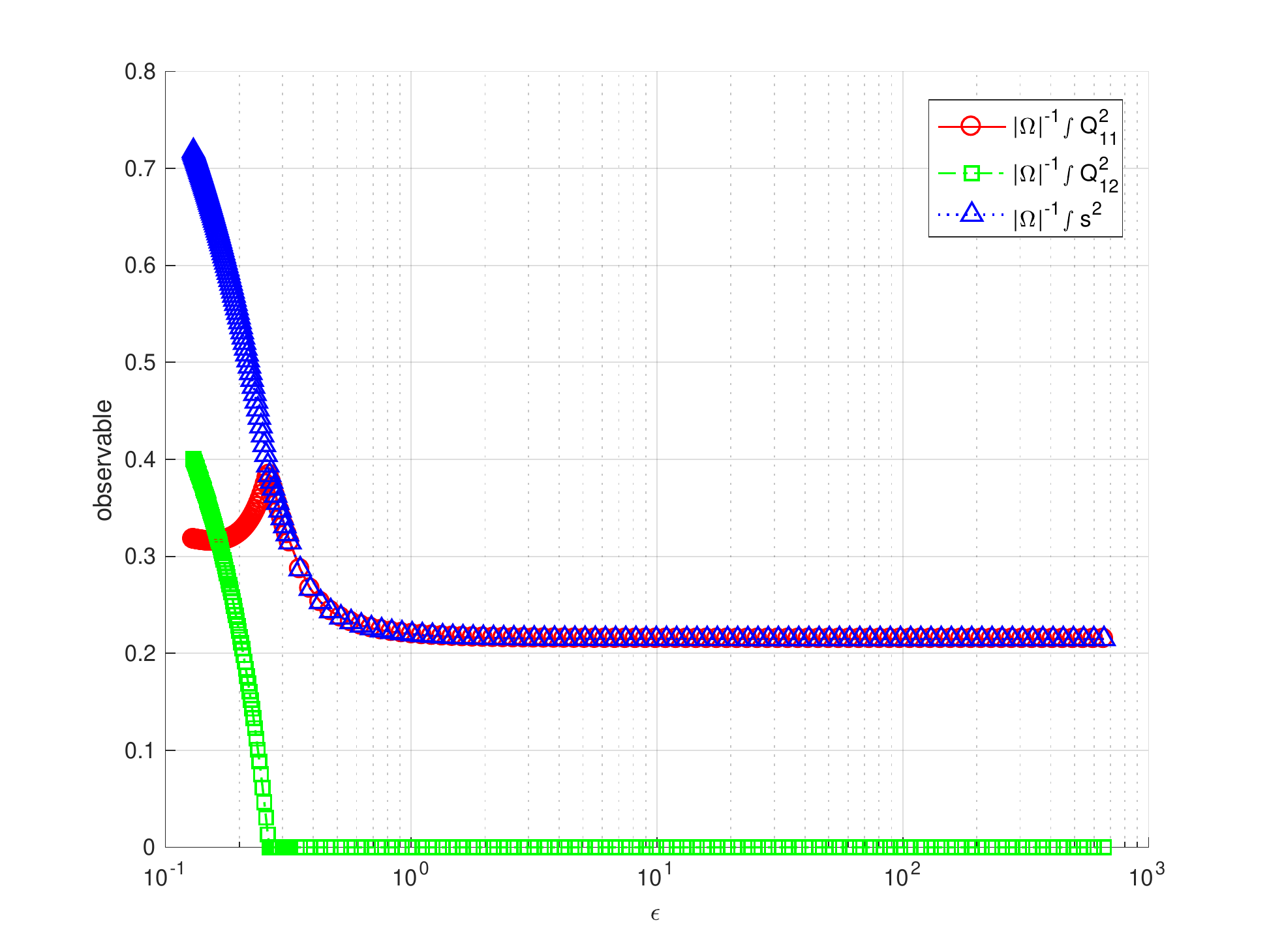} \\
  \caption{Characteristics of sD1 and sBD2 solutions as a function of $\epsilon$. Domain size $1.125 \times 1$, mesh spacing $h = 1 / 64$.}
  \label{fig:Bifurcation__LC_bifurcation_N4_h72_64_300121_largeepsilon_Q12}
\end{figure}

\section{Relaxation Mechanisms for Non-trivial Topologies}
\label{sec:nontrivial}

We can classify nematic equilibria on a rectangular domain with tangent boundary conditions in terms of their structural details near the rectangular vertices. More precisely, a nematic state, in our framework, is described by a 2D LdG $\mathbf{Q}$-tensor. In terms of the corresponding director, $\bf n = \( \cos \theta, \sin \theta \)$, we necessarily need that $\theta$ is some multiple of $\pi$ on the horizontal edges and that $\theta$ is an odd multiple of $\pi / 2$ on the vertical edges. In the simplest case, the director rotates by $\pm \pi / 2$ radians between a pair of adjacent edges, as is the case with the diagonal and rotated solutions described above. We refer to these solutions as having `trivial' topologies. 
However, it is admissible for the director to rotate by $n \pi / 2$ radians between a pair of adjacent edges, for an odd integer $\lmdl n \rmdl \geq 3$, and such equilibria are deemed to have `non-trivial' topologies. The degree of a vertex is defined to be $\omega = \frac{1}{2\pi}\frac{n \pi}{2} = \frac{n}{4}$ for some odd integer $n$, which is positive if the rotation is anticlockwise along an anticlockwise-oriented arc connecting the two intersecting edges at the vertex in question, and negative otherwise.

It is natural to ask if non-trivial topologies can be realised in practice? One can conceive a three-dimensional set-up of a shallow well $\mathcal{B}$ with a rectangular cross-section $\Omega$, with Dirichlet uniaxial tangent conditions on the lateral surfaces and suitable surface anchoring energies on the top and bottom surfaces. As argued in previous sections, under physically reasonable assumptions, we can reduce to a rescaled two-dimensional problem for which we study planar profiles on a rectangle $\tilde{\Omega} = \lsb 0, a \rsb \times \lsb 0, b \rsb$ with tangent conditions for the directors on the edges. The liquid crystal molecules could be locally rotated near chosen vertices, inducing non-trivial behaviour or non-trivial topologies. We expect these non-trivial topologies to relax to trivial topologies once the local rotation is removed and the relaxation pathways will be sensitive to $\epsilon$ and $\delta = \frac{a}{b}$, with interesting consequences for the corresponding optical properties and the relaxation pathways could even offer new switching mechanisms between diagonal and rotated states. Whilst this is largely a proof of concept at this stage, we explore some relaxation mechanisms from non-trivial to trivial topologies in this section as described below.

The first step is to construct an initial condition with a non-trivial topology for our numerical solvers. As a concrete example, we numerically solve
\begin{align}
  \Delta \theta \( x, y \) = 0, \quad \( x, y \) \in \tilde{\Omega}, \\
  \theta \( x, 0 \) = d _{1} = 0, \quad \theta \( a, y \) = d _{2} = \frac{\pi}{2}, \quad \theta \( x, b \) = d _{3} = 2 \pi, \quad \theta \( 0, y \) = d _{4} = \frac{5\pi}{2}.
\end{align} 
Let $\theta _{0}$ denote the solution of the boundary-value problem above; then $\theta _{0}$ has a $5 / 4$-degree defect at $\( 0, 0 \)$, a $-3 / 4$-degree defect at $\( a, b \)$, and two $-1 / 4$-degree defects at $\( 0, b \)$ and $\( a, 0 \)$.
We refer to $\( 0, 0 \)$ and $\( a, b \)$ as non-trivial vertices and to $\( 0, b \)$ and $\( a, 0 \)$ as being trivial vertices. 
We use $\theta _{0}$ to define a $\tilde{\mathbf{Q}} _{0}$-tensor field on the rescaled rectangle $\tilde{\Omega}$, by setting $\( \tilde{Q} _{0} \) _{11} = \cos \( 2 \theta _{0} \)$, $\( \tilde{Q} _{0} \) _{12} =  \sin \( 2 \theta _{0} \)$. The boundary condition $\( \( \tilde{Q} _{0} \) _{11}, \( \tilde{Q} _{0} \) _{12} \) = \tilde{\mathbf{g}} _{d}$ on $\partial \tilde{\Omega}$ for a given $d > 0$ (sufficiently small). Then we can compute the 2D LdG free energy minimizer with $\tilde{\mathbf{Q}} _{0}$ as initial condition, using the steepest descent method \cite{cauchy1847methode}. It is clear that the numerical procedure will converge to either sBD2 (when $a > b$) for large $\epsilon$ or one of the diagonal or rotated states for small $\epsilon$ and we record the intermediate states during the energy minimization procedure to study the relaxation mechanisms as a function of $\epsilon$ and the aspect ratio $\delta$.

In the numerical implementation, we fix $d = 0.03$ for the boundary condition, and $b = 1$ . 
For $a = 1$, i.e. a square domain, we find three different relaxation modes for different ranges of $\epsilon$ in the 2D LdG model (see \autoref{eq:Qtilde} to \autoref{eq:boundary_T}). For small $\epsilon$ (for example, $\epsilon = 0.05$), the non-trivial defects with $\lmdl \omega \rmdl \geq 3 / 4$, split into multiple defects and pairs of these newly created defects merge near vertices to give us the diagonal state that connects the two non-trivial vertices (D1). More precisely, we see two $1 / 2$-defects emanate from the $5 / 4$-vertex and two $-1 / 2$-defects emanate from the $-3 / 4$-vertex, and these two pairs attract each other and annihilate near the two trivial vertices. These leaves us with two $1 / 4$-degree vertices at $\( 0, 0 \)$ and $\( a, b \)$ and the D1 state connecting them. For intermediate values of $\epsilon$ (for example, $\epsilon = 0.09$), the $-3 / 4$-defect expels a $-1 / 2$-defect into the interior and the $5 / 4$-defect expels three $+1 / 2$-defects into the interior. A $\pm 1 / 2$-pair annihilates in the bulk and the remaining two $+1 / 2$-defects move along the shorter edges to the trivial vertices. In other words, we observe defect annihilation in the bulk and get the diagonal solution with the opposite diagonal orientation (D2); for large $\epsilon$ (for example, $\epsilon = 0.4$), the initial state relaxes to the sWORS state as expected, through almost isotropic states in the bulk. See \autoref{fig:nontrivial_1} for more details.
For $a = 5$, i.e., a rectangular domain, we find three analogous relaxation modes for different values of $\epsilon$. For small $\epsilon$ (for example, $\epsilon = 0.05$), the non-trivial defects (with degree $\lmdl \omega \rmdl \geq 3 / 4$) split into multiple defects and some of these defects move towards each other in the bulk and others migrate to the nearest vertex to yield the rotated state (R3); for intermediate $\epsilon$  (for example, $\epsilon = 0.09$), we observe the defect splitting near the non-trivial vertices and the expelled defects move along the long edges to either annihilate each other or change the topology of a vertex, yielding the diagonal solution (D2); for large values of $\epsilon$ (for example, $\epsilon = 0.4$), the initial state relaxes to the sBD2 state through almost isotropic states (with $\tilde{\mathbf{Q}} = 0$) in the interior. See \autoref{fig:nontrivial_5}.

These examples are certainly not exhaustive. However, they do highlight certain generic concepts---the non-trivial vertices will relax into trivial vertices by expelling the excess charge, $\eta = \( |n| - 1 \) / 2$, as  a combination of $\pm 1 / 2$ defects in the director field. This combination is not unique, and we cannot predict the combination as a function of $\epsilon$ and $\delta$ at present. For small values of $\epsilon$ and for $\delta = 1, 5$, the transient states are ordered (with relatively high scalar order parameter) and the different relaxation mechanisms will offer different optical properties. Intermediate values of $\epsilon$ exhibit transient states with almost line defects i.e. lines of low order along the rectangular edges or in the interior, surrounded by regions of high order. As $\epsilon$ increases further, the transient states become increasingly disordered and again, these could be interesting examples of `wetted' nematic states. Further, it seems that non-trivial topologies can relax to either the diagonal or rotated states and there appears to be no clear selection mechanism. Therefore, these numerical experiments offer interesting new routes for exploiting non-trivial topologies--- to create new defects, control defect dynamics, create new wetted states and even offer new pathways between diagonal and rotated states for switching dynamics. The role of $\epsilon$ is relatively clear, both for the size of the disordered regions in the transient states (the size increases linearly with $\epsilon$ as expected and hence, defects can move more easily for smaller values of $\epsilon$) and for the choices of the final relaxed state but the role of the geometrical anisotropy is less clear from these simulations.

\begin{figure}
  \centering
  \includegraphics[width = 0.3\textwidth] {./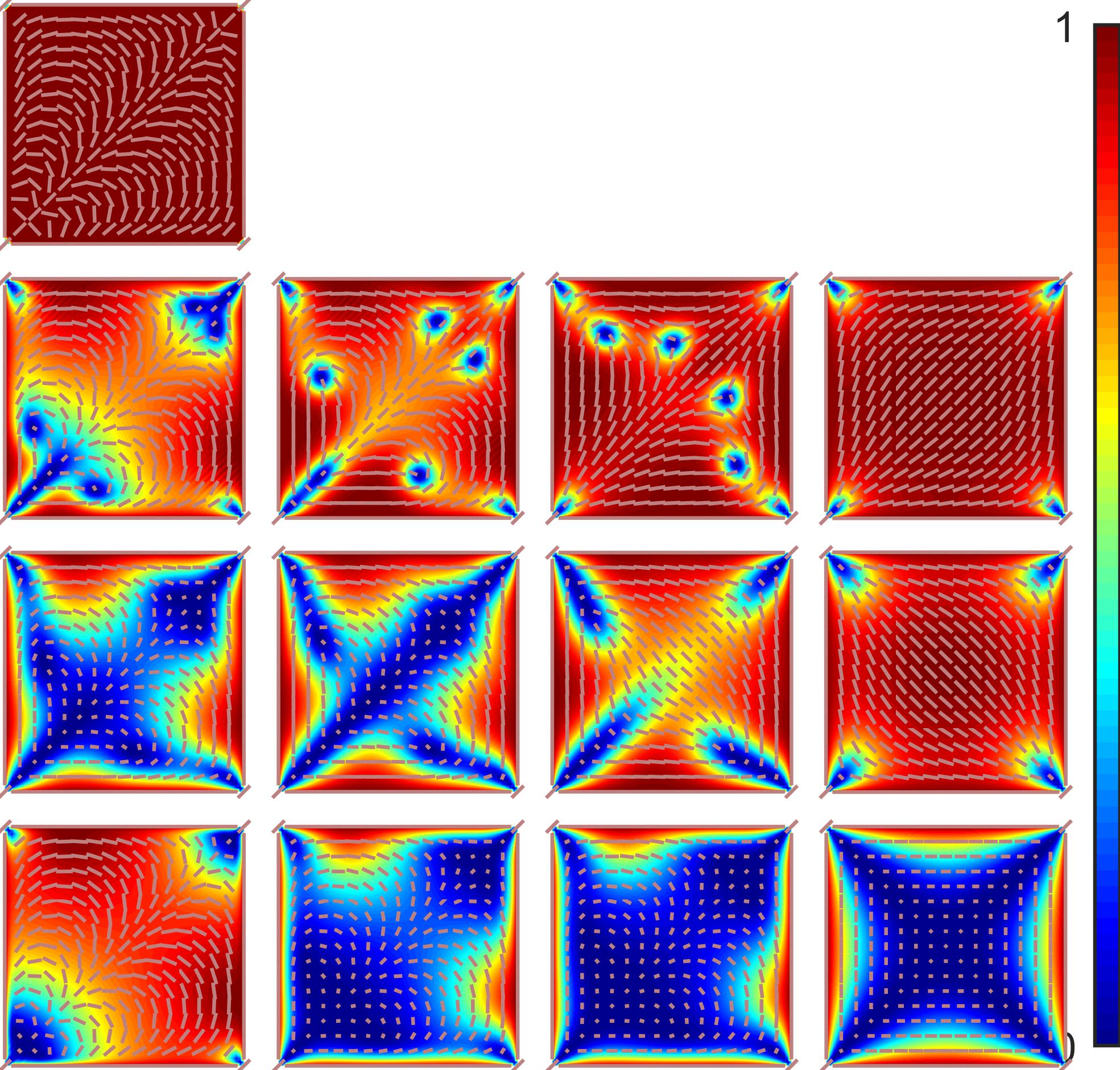} \\
  \caption{Relaxation of a non-trivial topology to a trivial topology, with a standard gradient flow model that evolves along a path of decreasing energy. Domain size is $1 \times 1$. Mesh spacing is $h = 1 / 64$. Row 1: initial configuration. Row 2: $\epsilon = 0.05$. Row 3: $\epsilon = 0.09$. Row 4: $\epsilon = 0.4$. The relaxation occurs from left to right in the second, third and fourth rows. The colour bar represents the value of $\tilde{s} _{\epsilon} ^{2} = \tr{\tilde{\mathbf{Q}} _{\epsilon} ^{2}} / 2$.}
  \label{fig:nontrivial_1}
\end{figure}

\begin{figure}
  \centering
  \includegraphics[width = \textwidth] {./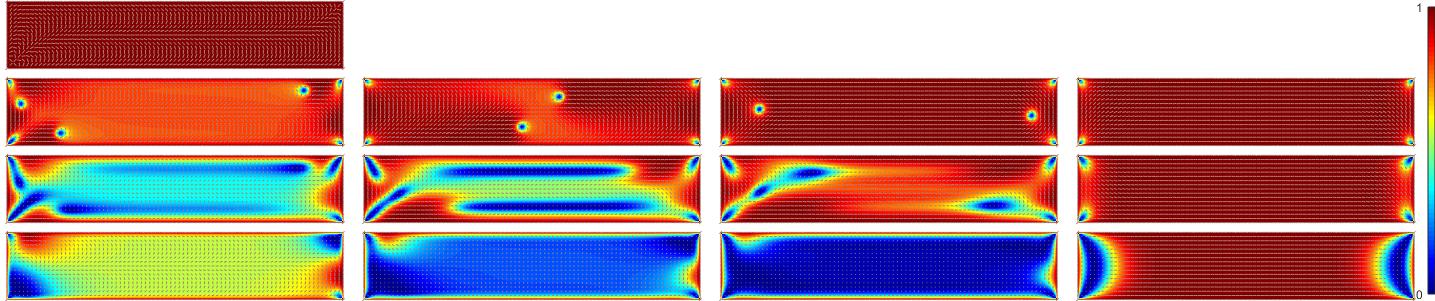} \\
  \caption{Relaxation of a non-trivial topology to a trivial topology, with a standard gradient flow model that evolves along a path of decreasing energy.. Domain size is $5 \times 1$. Mesh spacing is $h = 1 / 64$. Row 1: initial configuration. Row 2: $\epsilon = 0.05$. Row 3: $\epsilon = 0.09$. Row 4: $\epsilon = 0.4$. The relaxation proceeds from left to right in the second, third and fourth rows. The colour bar represents the value of $\tilde{s} _{\epsilon} ^{2} = \tr{\tilde{\mathbf{Q}} _{\epsilon} ^{2}} / 2$.}
  \label{fig:nontrivial_5}
\end{figure}

\section{Conclusion}
\label{sec:conclusions}

We have systematically studied reduced 2D nematic equilibria on rectangles, as a function of a material-dependent and geometry-dependent parameter, $\epsilon$. Our results carry over immediately to the three-dimensional framework at the fixed temperature $A = -\frac{B ^{2}}{3 C}$. Following the methods in \cite{wang2019order}, the qualitative conclusions will hold for all $A < 0$, at least in the limiting cases.

We compare our results on a rectangle with those on a square, with analytic results in two asymptotic limits: the $\epsilon \to \infty$ and the $\epsilon \to 0$ limits respectively. Our first result concerns the loss of the WORS (featured by two isotropic mutually perpendicular defect lines connecting two pairs of diagonally opposite vertices) on a rectangle with Dirichlet or infinitely strong anchoring on the edges. On a rectangle, we get two isotropic defect lines connecting two distinct pairs of adjacent vertices, for nano-scale rectangles in the $\epsilon \to \infty$ limit. Interestingly, the WORS structure survives on squares with relaxed but sufficiently strong anchoring in the $\epsilon \to \infty$ limit, as can be deduced from a uniqueness and symmetry argument. Our arguments are quite generic and we can extend them to arbitrary 2D polygons. In \cite{Han2019reduced}, the authors study reduced 2D LdG equilibria on regular polygons and in the $\epsilon \to \infty$ limit, they report that the limiting profiles have a unique isotropic point at the centre of the regular polygon, with the exception of the square. We can work with irregular polygons too and we speculate that the location of the zeroes or isotropic points are determined by the geometrical anisotropies. In fact, we conjecture that we may only have isotropic interior point defects on generic 2D polygons with tangent boundary conditions in the $\epsilon \to \infty$ limit and the special symmetries of the square and rectangle yield higher-dimensional line defects in this limit. It remains to be seen if we can obtain multiple interior point defects in this distinguished limit by exploiting the geometrical features in 2D.

In the $\epsilon \to 0$ limit for micron-scale or larger geometries, we study the interplay between three competing stable equilibria: the diagonal states and two different types of rotated states: the R1, R2 states and the R3, R4 states. Since the rectangle has lesser symmetry than a square and for $a > b$, the R3 and R4 states have lower energies than the R1, R2 states. We use a simple symmetry based argument to deduce that the diagonal states have lower energy than the competing R3, R4 states, in the limiting scenario. The asymptotic results are complemented by numerical studies of the solution bifurcations on rectangles, as a function of $\epsilon$ and the geometrical anisotropy. We find eight different solutions---the two diagonal solutions, the R1, R2, R3, R4 solutions as enumerated above and the BD1 and BD2 solutions respectively. The BD2 solutions are precisely the limiting profiles in the $\epsilon \to \infty$ limit. Interestingly, the BD solutions with transition layers/nodal (zero) lines near a pair of opposite edges are always unstable on a square. In the case of a rectangle with $a > b$, the BD2 solutions are stable whilst the BD1 solutions are unstable. The geometrical anisotropy essentially disconnects the R1, R2 solutions from the remainder of the bifurcation diagram. The effects of the geometrical anisotropy are more pronounced on the R1, R2 solution pathway in the sense that the structural transitions along this pathway depend on the geometrical parameter $a$. In fact, the BD1 solution (although unstable) is only observed for small values of $a$. This is an interesting example of the effects of $a$, or geometrical anisotropy in our framework.

In the last part, we study the relaxation mechanisms for tangent states with non-trivial topologies into the stable diagonal or rotated states, or the sBD2 state, for different values of $\epsilon$ and the rectangular aspect ratio. In \cite{majumdar2004elastic}, the authors perform related numerical experiments in an Oseen-Frank framework which can be identified with our 2D LdG framework under the restriction of constant $s$; the authors predict that the relaxation mechanisms should either concentrate near the edges or take place in the interior according to the aspect ratio. Our numerical experiments illustrate the effects of $\epsilon$; we find that the relaxation mechanisms are dominated by ejections of bubbles of `non-trivial topology' and the size of these bubbles depends on $\epsilon$. We do not observe clear trends with respect to the aspect ratio possibly because our numerical examples are not exhaustive, the relaxation mechanisms are not unique and we have not been able to locate the aspect-ratio dependent relaxation mechanisms or because the Oseen-Frank studies in \cite{majumdar2004elastic} are  restrictive and lose relevance in the more comprehensive Landau-de Gennes framework. These relaxation mechanisms are of practical relevance for transition pathways between equilibria, switching mechanisms etc. and we will study these mechanisms more carefully in the future. To conclude, our model examples on a rectangle shed useful insight into the consequences of geometrical anisotropy, how anisotropy can distort and displace defects in stable equilibria, stabilise states, disconnect solution branches and in doing so, introduce new possibilities for multi-stability in tailor-made nematic-filled geometries. 

\section{Acknowledgement}

The authors acknowledge support from a Royal Society Newton Advanced Fellowship.
LF acknowledges support from the Visiting Postgraduate Scholar programme at University of Bath.
LF thanks Yucen Han for comments.
LF and LZ were partially supported by National Natural Science Foundations of China (NSFC 11871339, 11861131, 11571314).
AM acknowledges support from Shanghai Jiao Tong University as a Visiting Professor. 
AM was supported by an EPSRC Career Acceleration Fellowship EP/J001686/1 and EP/J001686/2 and is supported by an OCIAM Visiting Fellowship and the Keble Advanced Studies Centre.
The authors thank Giacomo Canevari for very useful discussions on the weak anchoring section.

\section*{Reference}


\bibliography{notes_LiquidCrystals_ref_190517}{}
  \bibliographystyle{alpha}

\end{document}